\def\Label#1{}
\newcommand{\pdfextension}{pdf}
\newcommand{\pngextension}{png}
\let\rho=\varrho
\def\defcommand#1:#2?{\global\let\myref\myundefined%
	\ifthenelse{\equal{#1}{e}}{\global\let\myref\eref}{}%
	\ifthenelse{\equal{#1}{p}}{\global\let\myref\pref}{}%
	\ifthenelse{\equal{#1}{l}}{\global\let\myref\lref}{}%
	\ifthenelse{\equal{#1}{r}}{\global\let\myref\rref}{}%
	\ifthenelse{\equal{#1}{t}}{\global\let\myref\tref}{}%
	\ifthenelse{\equal{#1}{f}}{\global\let\myref\fref}{}%
	\ifthenelse{\equal{#1}{s}}{\global\let\myref\sref}{}%
	\ifthenelse{\equal{#1}{d}}{\global\let\myref\dref}{}%
}
\let\ssref=\ref
\def\fref#1{Figure~\ssref{#1}}
\def\cref#1{Condition~\ssref{#1}}
\def\Cref#1{Corollary~\ssref{#1}}
\def\eref#1{(\ssref{#1})}
\def\sref#1{\textsection\ssref{#1}}
\def\lref#1{Lemma~\ssref{#1}}
\def\rref#1{Remark~\ssref{#1}}
\def\tref#1{Theorem~\ssref{#1}}
\def\dref#1{Definition~\ssref{#1}}
\def\pref#1{Proposition~\ssref{#1}}
\def\aref#1{Assumption~\ssref{#1}}
\def\dref#1{Definition~\ssref{#1}}
\def\myundefined#1{% [arxiv_v2: inline-PS \special stripped, 21 chars]{what is #1?}%
	% [arxiv_v2: inline-PS \special stripped, 21 chars]}
\def\ref#1{\defcommand#1?%
	\myref{#1}}
\let\myref\relax
\newenvironment{myitem}
{\begin{itemize}
  \setlength{\itemsep}{1pt}
  \setlength{\parskip}{0pt}
  \setlength{\parsep}{0pt}}
{\end{itemize}}

\newenvironment{myenum}
{\begin{enumerate}
  \setlength{\itemsep}{1pt}
  \setlength{\parskip}{0pt}
  \setlength{\parsep}{0pt}}
{\end{enumerate}}

\usepackage{amsmath}
\usepackage{amsfonts}
\usepackage{graphicx}
\usepackage{times}
\usepackage{amsthm}
\usepackage{ amssymb }
\usepackage{color}
\usepackage{mhequ}
\usepackage{dsfont}
\usepackage[margin=2.9cm]{geometry}
\usepackage{color}
\usepackage{url}
\usepackage{lipsum}
\usepackage{authblk}
\usepackage{subfig}
\usepackage{mhequ}

\usepackage{tikz}
\usepackage[graphics, active, tightpage]{}

\usepackage[expansion=true]{microtype}

\captionsetup[figure]{margin=2cm,font=footnotesize,labelfont=bf,labelsep=endash,textfont=rm}\captionsetup[subfigure]{margin=0pt}

\def\thecomma{\ifx,\thenewxt \else\ifx;\thenext \else\ifx.\thenext
	\else\ifx!\thenext \else\ifx:\thenext\else\ifx)\thenext \else \
	\fi\fi\fi\fi\fi\fi}
\def\condblank{\futurelet\thenext\thecomma}
\def\ie{{\it i.e.,}\condblank}
\def\eg{{\it e.g.,}\condblank}

\numberwithin{equation}{section}

\newtheorem{theorem}{Theorem}[section]
\newtheorem{lemma}[theorem]{Lemma}
\newtheorem{proposition}[theorem]{Proposition}
\newtheorem{definition}[theorem]{Definition}

\newtheorem{assumption}[theorem]{Assumption}

\theoremstyle{definition} %remarks in rm
\newtheorem{remark}[theorem]{Remark}

\bibliographystyle{JPE}
\usepackage{cite}

\usepackage{stmaryrd}

\newcommand{\dd}{\mathrm{d}}
\newcommand{\dt}{\,\dd t}
\newcommand{\mpart}[2]{\frac{\partial #1}{\partial #2}}
\newcommand{\avg}[1]{\left\langle #1\right\rangle}

\newcommand{\bigoh}[1]{\hat {\mathcal O}(p_2^{#1})}
\newcommand{\bigohneg}[1]{\hat {\mathcal O}\!\left({p_2^{-#1}}\right)}

\def\argcdot{{\,\cdot\,}}

\newcommand{\cU}{{\ensuremath{\mathcal U}} }

%%%%%%%%%%%%%%%%%%%%%%%%%%%%%%%%%%%%%%%%%%%%%%%%%%%%%%%%%%%%%%%%%%%%%%%%%%%%%%
%%%%%%%%%%%% Blackboard bolds
%%%%%%%%%%%%%%%%%%%%%%%%%%%%%%%%%%%%%%%%%%%%%%%%%%%%%%%%%%%%%%%%%%%%%%%%%%%%%%

\newcommand{\bbN}{{\ensuremath{\mathbb N}} }

\newcommand{\bbR}{{\ensuremath{\mathbb R}} }

\newcommand{\bbT}{{\ensuremath{\mathbb T}} }

\newcommand{\bbZ}{{\ensuremath{\mathbb Z}} }

%%%%%%%%%%%%%%%%%%%%%%%%%%%%%%%%%%%%%%%%%%%%%%%%%%%%%%%%%%%%%%%%%%%%%%%%%%%%%%
%%%%%%%%%%%% Greek letters
%%%%%%%%%%%%%%%%%%%%%%%%%%%%%%%%%%%%%%%%%%%%%%%%%%%%%%%%%%%%%%%%%%%%%%%%%%%%%%

\newcommand{\ga}{\alpha}

            % \gg already exists...
\newcommand{\gd}{\delta}
       % \ge already exists...

\newcommand{\gO}{\Omega}

\let\kappa=\varkappa
\let\phi=\varphi

\def\KK{{\mathcal K}}

\newcommand{\ind}{\mathbf{1}}
\def\p2t2{{\tilde p_2^{\,2}}}

\begin{document}

\title{Non-equilibrium steady state and subgeometric ergodicity for a chain of
three coupled rotors}
\author[1]{N. Cuneo}
\author[1,2]{J.-P. Eckmann}
\author[3]{C. Poquet}
\affil[1]{D\'epartement de physique th\'eorique, Universit\'e de Gen\`eve}
\affil[2]{Section de math\'ematiques, Universit\'e de Gen\`eve}
\affil[3]{Dipartimento di Matematica, Universit\`a di Roma Tor Vergata}
\date{} %remove date

\maketitle
\thispagestyle{empty}
\begin{abstract}
We consider a chain of three rotors (rotators) whose ends are coupled to
stochastic heat baths. The temperatures of the two baths can be different, and
we  allow some constant torque to be applied at each end of the
chain. Under some non-degeneracy condition on the interaction potentials, we
show that the
process admits a unique invariant probability measure, and that it is ergodic
with a stretched exponential rate. 
The interesting issue is to estimate the
rate at which 
the energy of the middle rotor decreases. As it is not
directly connected to the heat baths, its energy can only be
dissipated through the two outer rotors. But when the middle rotor spins
very rapidly, it fails to interact effectively with its neighbours due to the
rapid oscillations of the forces. By averaging techniques, we obtain an
effective dynamics for the middle rotor, which then enables us to find a
 Lyapunov function. This and an irreducibility
argument give the 
desired result. We finally illustrate numerically some properties of the
non-equilibrium steady state.
\end{abstract}

\section{Introduction}

Hamiltonian chains of mechanical
oscillators have been studied for a long time. 
Several models describe a linear chain of masses, with polynomial
\emph{interaction} potentials 
between adjacent masses, and \emph{pinning} potentials which tie the masses
down in the laboratory frame. Under the assumption that the interaction
is stronger than the pinning, it was shown in
\cite{eckmann_nonequilibrium_1999} that the model has
an invariant probability measure
when the chain is attached at each extremity to two heat baths at different
temperatures. That paper, and later developments, see \eg
\cite{eckmann_hairer_2000},  relied on analytic arguments, showing in particular
that the infinitesimal generator has compact resolvent in a suitable
function space. 

Two elements were added later in the paper
\cite{reybellet_exponential_2002}: First, the authors used a more
probabilistic approach, based on Harris recurrence  
as developed by Meyn and Tweedie \cite{meyn_markov_2009}. Second, a
detailed analysis allowed them
to understand the transfer of energy
from the central oscillators to the (dissipative) baths. In that case
the convergence to 
the stationary state is of exponential rate. In \cite{carmona_2007},
this reasoning was extended to more general contexts.

The dynamics of the chain is very different when the pinning potential is  
\emph{stronger} than the interaction potential. In that case the
chain may have breathers, \ie oscillators concentrating a lot of energy, 
which is transferred only very slowly to their neighbours. This may
lead to subexponential ergodicity, as shown by 
Hairer and Mattingly \cite{hairer_slow_2009}
in the case of a chain of 3 oscillators with strong pinning.

\begin{figure}[ht]% For lemma 5.1 partition
\centering

\begin{tikzpicture}[scale=1.5]
\tikzset{
    partial ellipse/.style args={#1:#2:#3}{
        insert path={+ (#1:#3) arc (#1:#2:#3)}
    }
}

   \begin{scope}[shift={(-1.5,0)}]
        \filldraw[fill=gray!30] (0.05,0.0) ellipse (0.5 and 1);
		\fill[gray!30] (0, 1.0) rectangle (0.05,-1);
		\draw (0,1.0)--(0.05,1.0);
		\draw (0,-1.0)--(0.05,-1.0);
        \filldraw[fill=gray!10] (0.0,0) ellipse (0.5 and 1);
		
		\draw[->] (0,0) [partial ellipse=-90:-33:0.7 and 1.2];
		
		\draw (-0.05, -0.1)--(0.05,0.1);
		\draw (0.05, -0.1)--(-0.05,0.1);
		\draw[thick] (-0.75, 0)--(0,0);
		\draw[thick, dotted, gray] (0, 0)--(0.55,0);
		\draw[thick] (0.55, 0)--(0.75,0);
		\draw[dashed] (0,-1)--(0,0);
		\draw[dashed] (0.43301270189,-0.5)--(0,0);
		\node[] at (0.4,-1.2) {$q_1$};
	\end{scope};

	\begin{scope}[shift={(0,0)}]
        \filldraw[fill=gray!30] (0.05,0.0) ellipse (0.5 and 1);
		\fill[gray!30] (0, 1.0) rectangle (0.05,-1);
		\draw (0,1.0)--(0.05,1.0);
		\draw (0,-1.0)--(0.05,-1.0);
        \filldraw[fill=gray!10] (0.0,0) ellipse (0.5 and 1);
		
		\draw[->] (0,0) [partial ellipse=-90:-55:0.7 and 1.2];
		
		\draw (-0.05, -0.1)--(0.05,0.1);
		\draw (0.05, -0.1)--(-0.05,0.1);
		\draw[thick] (-0.75, 0)--(0,0);
		\draw[thick, dotted, gray] (0, 0)--(0.5,0);
		\draw[thick] (0.55, 0)--(0.75,0);
		\draw[dashed] (0,-1)--(0,0);
		\draw[dashed] (0.32139380484,-0.76604444311)--(0,0);
		\node[] at (0.4,-1.2) {$q_2$};
	\end{scope};

   \begin{scope}[shift={(1.5,0)}]
        \filldraw[fill=gray!30] (0.05,0.0) ellipse (0.5 and 1);
		\fill[gray!30] (0, 1.0) rectangle (0.05,-1);
		\draw (0,1.0)--(0.05,1.0);
		\draw (0,-1.0)--(0.05,-1.0);
        \filldraw[fill=gray!10] (0.0,0) ellipse (0.5 and 1);
		
		\draw[->] (0,0) [partial ellipse=-90:-23:0.7 and 1.2];
		
		\draw (-0.05, -0.1)--(0.05,0.1);
		\draw (0.05, -0.1)--(-0.05,0.1);
		\draw[thick] (-0.75, 0)--(0,0);
		\draw[thick, dotted, gray] (0, 0)--(0.5,0);
		\draw[thick] (0.55, 0)--(0.75,0);
		\draw[dashed] (0,-1)--(0,0);
		\draw[dashed] (0.46984631039,-0.34202014332)--(0,0);
		\node[] at (0.4,-1.2) {$q_3$};
	\end{scope};

	\node[] at (-2.95,0) {$T_1$};
	\node[] at (3.1,0) {$T_3$};

	\draw[line width=3pt, white] (-2.3,0) [partial ellipse=-45:250:0.2 and 0.35];
	\draw[->, line width=0.6pt] (-2.3,0) [partial ellipse=-45:250:0.2 and 0.35];
	\draw[line width=3pt, white] (-2.4, 0)--(-2.7,0);
	\draw[thick] (-2.25, 0)--(-2.7,0);
	\node[] at (-2.3,0.5) {$\tau_1$};

	\draw[thick] (2.4, 0)--(2.8,0);
	\draw[line width=3pt, white] (2.4,0) [partial ellipse=40:-240:0.2 and 0.35];
	\draw[->, line width=0.6pt] (2.4,0) [partial ellipse=40:-240:0.2 and 0.35];
	\draw[line width=3pt, white] (2.17, 0)--(2.3,0);
	\draw[thick] (2.15, 0)--(2.4,0);
	\node[] at (2.4,0.5) {$\tau_3$};

\end{tikzpicture}
\caption{A chain of three rotors with two external torques $\tau_1$ and $\tau_3$
and two heat baths at temperatures $T_1$ and $T_3$.}
\label{f:system}
\end{figure}

In this paper, we discuss a model with  
three rotors (see \fref{f:system}), each given by an angle $q_i\in \bbT =
\bbR/2\pi
\bbZ$ and a momentum $p_i \in \bbR$, $i=1,2,3$. The phase space is therefore
$\Omega = \bbT^3 \times \mathbb R^3$, and we will consider the measure space
$(\Omega, \mathcal B)$,
where $\mathcal B$ is the Borel $\sigma$-field over $\Omega$. We will denote the
points
of $\Omega$ by $x = (q, p)$ with $q = (q_1, q_2, q_3)$ and $p=(p_1, p_2, p_3)$.

We introduce the Hamiltonian
\begin{equ}
H(q, p) = \sum_{i=1}^3 \left({\frac 1 2} p_i^2 +
U_i(q_i)\right) + \sum_{b=1,3}W_b(q_2-q_b)~,
\end{equ}
with some smooth {\em interaction potentials} $W_b:\bbT \to \bbR$, $b=1,3$, and
some smooth {\em pinning potentials} $U_i:\bbT \to \bbR, i=1,2,3$. We now let
the two outer rotors (\ie the rotors 1 and 3) interact with
Langevin-type heat baths at temperatures $T_1, T_3 >0$, and with coupling
constants
$\gamma_1, \gamma_3 > 0$. Moreover, we apply some constant (possibly zero)
external
forces $\tau_1$ and $\tau_3$ to the two outer rotors.
Introducing $w_b = W_b'$
and $u_i = U_i'$, we obtain the system of SDE:
\begin{equs}[eq:sde]
	\dd q_i(t)&=p_i(t)\dt~, \qquad i=1,2,3~,\\
	\dd p_2(t)&=-\sum_{b=1,3}w_b\big(q_2(t)-q_b(t)\big) \dt
-u_2(q_2(t))\dt~,\\
	\dd p_b(t)&=\Big(w_b\big(q_2(t)-q_b(t)\big)
+\tau_b-u_b\big(q_b(t)\big)-\gamma_b
	p_b(t)\Big)\dt +\sqrt{2\gamma_b T_b}\,\dd B^b_t~,\qquad b=1,3~,
\end{equs}
where $B^1$ and $B^3$ are standard independent Brownian motions.

\medskip\noindent{\bf Notation}. In the sequel, the index $b$ always refers to
the rotors 1 and 3 at the boundaries of the chain, and we write $\sum_b$
instead of $\sum_{b=1,3}$.

\begin{remark}
Our model can be viewed as an extreme case of that studied in
\cite{hairer_slow_2009}.
A key factor in that paper is to realise how the 
frequency of one isolated pinned oscillator depends on its
energy. Indeed, for an
isolated oscillator with  Hamiltonian $p^2/2 + q^{2k}/(2k)$, the
frequency grows like the energy to the power $\frac 12 - \frac 1{2k}$. When
$k\to\infty$, the exponent converges to $\frac 12$. In this limit, the
pinning potential formally becomes an infinite potential well, so that the
variable $q$ is constrained to a compact interval. In our model, the
position (angle) of a rotor lives in a compact space, and its frequency
scales like its momentum, \ie like the square root of its energy. Therefore,
we can view our rotor model as some kind of ``infinite pinning''
limit. 
\end{remark}

We make the following non-degeneracy assumption (clearly satisfied for
\eg $w_1=w_3=\sin$):
\begin{assumption}\label{as:assumptioncoupling} There is at least one $b\in
\{1,3\}$
such that for each $s\in \mathbb T$, at least one of the derivatives
$w_b^{(k)}(s), k\geq 1$
is non-zero.
\end{assumption}

For all initial conditions $x\in \Omega$ and all times $t\geq 0$, we denote by
$P^t(x, \argcdot)$ the transition probability of the Markov process associated
to \eref{eq:sde}. Since the coefficients of the SDE \eref{eq:sde} are globally
Lipschitz, the solutions are almost surely defined for all times and all initial
conditions, so that $P^t(x, \argcdot)$ is well-defined as a probability measure
on $(\Omega, \mathcal B)$.

We now introduce the main theorem, in which we write
$$
\|\nu\|_{f} = \sup_{|h|\leq f}\int_\Omega h \dd \nu
$$
for any continuous function
$f > 0$ on $\Omega$ and any signed measure $\nu$ on $(\Omega, \mathcal B)$.

\begin{theorem}\label{thm:mainthm} Under \aref{as:assumptioncoupling}, the
following holds for the Markov process defined by \eref{eq:sde}:
\begin{myenum}
	\item[(i)] The transition kernel $P^t$ has a  density
	$p_t(x, y)$ in $\mathcal C^\infty((0, \infty)\times \Omega\times \Omega)$.
	\item[(ii)] The process admits a unique invariant measure
          $\pi$, which has a 
	smooth density.
	\item[(iii)] For all sufficiently small $\beta > 0$ and all $ \beta' \in [0, \beta)$,
	there are constants $C, \lambda> 0$ such that for all $t\geq 0$ and all 
    $x = (q_1, q_2, \dots, p_3)\in\Omega$,
	\begin{equ}
		\|P^t(x, \argcdot) - \pi \|_{e^{\beta'H}} \leq C(1+p_2^2)e^{\beta H(x)}e^{-\lambda t^{1/2}}~.
	\end{equ}
\end{myenum}
\end{theorem}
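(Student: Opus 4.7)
Parts (i) and (ii) follow by now-standard routes for hypoelliptic Langevin-type diffusions, while (iii) is where the genuine novelty lies.

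For (i), I verify H\"ormander's bracket condition for the generator associated with \eref{eq:sde}. The only noise directions are $\partial_{p_1}$ and $\partial_{p_3}$. Bracketing them with the Hamiltonian drift yields $\partial_{q_1}$ and $\partial_{q_3}$, and iterated brackets then produce the family $w_b^{(k)}(q_2-q_b)\partial_{p_2}$ for every $k\geq 1$, and eventually $\partial_{q_2}$. At any $x\in\Omega$, \aref{as:assumptioncoupling} supplies some $b\in\{1,3\}$ and some $k\geq 1$ with $w_b^{(k)}(q_2-q_b)\neq 0$, so H\"ormander's theorem gives the smooth density of $P^t$ and the strong Feller property. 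For (ii), the direct computation $\cL H = -\sum_b \gamma_b p_b^2 + \sum_b(\gamma_b T_b + \tau_b p_b)$, together with the global Lipschitz character of the drift, rules out explosion and produces a coarse Lyapunov control that, combined with (i), gives existence of an invariant measure $\pi$ by Krylov--Bogolyubov; its smoothness follows from (i). Uniqueness follows from the strong Feller property combined with topological irreducibility: given any two points of $\Omega$, I would exhibit a deterministic control acting only on $p_1$ and $p_3$ (first spin the outer rotors to the prescribed angles, then adjust their momenta, then let the coupling transmit the required motion to $(q_2,p_2)$) that connects them in the noiseless system, so the Stroock--Varadhan support theorem yields full topological support of $P^t(x,\argcdot)$.

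The heart of the paper is (iii), and the plan is to use a subgeometric Harris-type theorem (of the kind developed by Douc--Fort--Moulines or Bakry--Cattiaux--Guillin, already exploited in \cite{hairer_slow_2009}) which converts a drift inequality of the form
\begin{equ}
   \cL V \leq -\varphi(V) + K\,\ind_C
\end{equ}
with $C$ petite and $\varphi$ concave increasing, into a convergence rate governed by $\varphi$; the rate $e^{-\lambda t^{1/2}}$ is precisely the one associated with $\varphi(v) \sim v/\log v$. The naive guess $V_0 = e^{\beta H}$ gives
\begin{equ}
   \cL V_0 = \beta V_0\Big(-\sum_b \gamma_b(1-\beta T_b)p_b^2 + \sum_b \tau_b p_b + \sum_b \gamma_b T_b\Big),
\end{equ}
which is strongly dissipative as soon as $|p_1|+|p_3|$ is large (for $\beta < \min_b 1/T_b$), but conveys no information when $|p_2|$ is large while $|p_1|,|p_3|$ remain bounded. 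In that regime $q_2(t)\approx q_2(0)+p_2 t$, so the couplings $w_b(q_2-q_b)$ oscillate at frequency $|p_2|$ and their averaged effect on the outer rotors---and hence on the bath dissipation of the middle rotor's energy---is formally of order $1/|p_2|$.

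The remedy, in the spirit of \cite{hairer_slow_2009}, is to construct correctors capturing the averaged dynamics. Splitting $\cL = \cL_0 + \cL_1$ with $\cL_0 = p_2\,\partial_{q_2}$ the dominant piece for large $|p_2|$, one solves the cell problem $\cL_0 \Psi_b = w_b(q_2-q_b)$ explicitly by $\Psi_b = W_b(q_2-q_b)/p_2$ (up to a $q_2$-independent piece), and iterates to obtain a modified Lyapunov function of rough shape
\begin{equ}
   V(x) = (1+p_2^2)\exp\bigl(\beta H(x) + \Phi(x)/p_2\bigr),
\end{equ}
where $\Phi$ is a judicious smooth combination of the $\Psi_b$ with the outer momenta. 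A cutoff handles the singularity at $p_2=0$. One then aims at $\cL V \leq -c\,V/p_2^2 + K\,\ind_C$ in the large-$|p_2|$ region; since $p_2^2 \sim H \sim (\log V)/\beta$ there, this reads $\cL V \leq -c\,V/\log V + K\,\ind_C$, giving exactly the desired rate, and the explicit prefactor $(1+p_2^2)$ in the statement traces back to the $(1+p_2^2)$ in $V$. The main obstacles are: (a) keeping the correctors bounded and smooth uniformly in $q$ and across $|p_2|\to 0$, which is the delicate part because \aref{as:assumptioncoupling} only controls $w_b$ through its Taylor coefficients at each point rather than via global non-degeneracy; and (b) showing that in the moderate-$|p_2|$ regime the bracket-based propagation of noise to $p_2$---the very mechanism that drives (i)---produces effective dissipation strong enough to close the drift inequality globally. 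The petite set $C$ is then provided by the smoothness and irreducibility from (i)--(ii).
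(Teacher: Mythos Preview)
Your framework for (iii)---a Douc--Fort--Guillin drift condition with $\varphi(v)\sim v/\log v$---is exactly the paper's, and your treatment of (i)--(ii) matches it. The gap is in the depth of the averaging and in the shape of the Lyapunov function.

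One cell correction $\Psi_b=W_b/p_2$ is far from sufficient. After writing $\bar p_2 = p_2 + \Phi_2/p_2$ the drift of $\bar p_2$ is of order $p_2^{-1}$ but still oscillatory with zero $q_2$-mean; a second averaging step pushes things to order $p_2^{-2}$, again with zero mean (several integrations by parts in $q_2$ are needed to see the cancellations). Only at order $p_2^{-3}$ does a non-zero average appear: the paper constructs $\tilde p_2 = p_2 + \Phi_2/p_2 + (\text{order }p_2^{-2})$ with $L\tilde p_2 = -\alpha\, p_2^{-3} + (\text{order }p_2^{-4})$, where $\alpha=\sum_b\gamma_b\langle W_b^2\rangle$, and with diffusion coefficients $\sigma_b\sim W_b/p_2^2$. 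The Lyapunov function is then $V = 1 + A\,\rho(p)\,\tilde p_2^{\,2}\,e^{\beta\tilde p_2^{\,2}/2} + e^{\beta H}$, a \emph{sum} of a function of $\tilde p_2$ alone and $e^{\beta H}$, with $\rho$ a cutoff supported where $|p_2|\gg (p_1^2+p_3^2)^k$. The reason this structure works is that $L(\tilde p_2^{\,2}e^{\beta\tilde p_2^{\,2}/2})\le(-\alpha\beta+O(\beta^2))\,e^{\beta\tilde p_2^{\,2}/2}$: the $p_2^{-3}$ drift exactly balances the $\tilde p_2^{\,3}$ coming from the chain rule on the exponent, leaving a strictly negative constant for small $\beta$. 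Your proposed shape $(1+p_2^2)e^{\beta H+\Phi/p_2}$ instead produces a leading term $2p_2\,\phi_2\,e^{\beta H}$ from $L(1+p_2^2)$, oscillatory of order $p_2\, e^{\beta H}$; killing it forces the corrector to be iterated to order $p_2^{-3}$ just as above, so you are driven back to the paper's computation anyway. Your obstacles (a)--(b) are also misplaced: the singularity at $p_2\to 0$ is handled by the cutoff $\rho$, and \aref{as:assumptioncoupling} enters the Lyapunov construction only through $\alpha>0$ (its real role is H\"ormander); the transition region is controlled entirely by the $e^{\beta H}$ summand via $Le^{\beta H}\le(C_1-C_2(p_1^2+p_3^2))e^{\beta H}$, with no need to invoke noise propagation to $p_2$.
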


\begin{remark}
If both heat baths are at the same temperature, say, $T_1 = T_3 = T > 0$, and the
forces $\tau_1$ and $\tau_3$ are zero, then the
system is at thermal equilibrium and the Gibbs measure with density proportional
to $e^{-H/T}$ is invariant. Indeed, one easily checks that this density verifies
the stationary Fokker-Planck equation $L^*e^{-H/T} = 0$, where $L^*$ is the
 formal adjoint of the generator $L$ introduced below.
\end{remark}

\begin{remark} In fact, the results we prove here apply with hardly any
modification to the ``star'' configuration with one central rotor interacting
with $m$ external rotors, which in turn are coupled to heat baths (\ie $m+1$
rotors and $m$ heat baths). 

In addition, some studies
(\eg \cite{iacobucci_negative_2011}) consider chains with fixed boundary conditions.
For the left end of the chain, this corresponds to adding a ``dummy'' rotor
 0 which does not move but interacts with rotor 1. This is covered by our
theory by adding some contribution to the pinning potential $U_1$. The same
applies to the right end and $U_3$.
\end{remark}

Chains of rotors provide toy models for the study of non-equilibrium
statistical mechanics. In \cite{iacobucci_negative_2011} long chains have been
studied numerically, and it appears that even when the external temperatures are
different
and external forces are applied, local thermal equilibrium is satisfied in the
stationary state in the limit of infinitely long chains. This stationary state
may have some surprising features, like a large amount of energy
in the bulk of the chain when the
boundary conditions are properly chosen.
In our case of course we are far from local thermal equilibrium, 
since we only study
systems made of three rotors. We will present some numerical simulations
of our system in \sref{sec:numerics}, highlighting some interesting properties 
of the stationary state.

What corresponds here to the breathers observed in other models is the
situation where the energy of the system is very large and mostly concentrated
in the middle rotor. The middle rotor then spins very rapidly, 
and the interaction forces oscillate so fast that they have very little
net effect. In this case, the middle rotor effectively decouples
from the rest of the system, and the main difficulty is to show that
its energy eventually decreases with some well-controlled bounds.

The idea used in \cite{hairer_slow_2009} for the chain
of three pinned oscillators is to average the oscillatory forces, and 
exhibit a negative feedback in the regime where the breather dominates the
dynamics.
The proof of \tref{thm:mainthm} in the present paper is based on a
systematisation
of this idea, as explained in \sref{sec:average}.

The paper is structured as follows: In \sref{sec:ELF} we introduce
a sufficient condition for subgeometric ergodicity from 
\cite{douc_subgeometric_2009}.
In \sref{sec:p2dynamics} we study the behaviour
of the middle rotor. In \sref{sec:effectivedyntoLyapunov} we show how to use
the study of $p_2$ to get a Lyapunov function.
In \sref{sec:controlirred} we provide the necessary technical input to the
theorem of \cite{douc_subgeometric_2009}. Finally, we illustrate numerically
some properties of the non-equilibrium steady state in \sref{sec:numerics}.

\section{Ergodicity and Lyapunov functions}\label{sec:ELF}

The proof of \tref{thm:mainthm} relies on the results of
\cite{douc_subgeometric_2009} which in turn are based on
the theory exposed in \cite{meyn_markov_2009}. The theory of \cite{meyn_markov_2009}
shows that one can prove the
ergodicity of an irreducible Markov 
process and estimate the rate of convergence toward its
invariant measure if one has a good control of the return
times of the process to particular sets, called {\em  petite sets}.
A set $K$ is petite if there exist a probability measure $a$ on $[0,\infty)$
and a non-zero measure $\nu_a$ on $\Omega$ such that for all $x\in K$ one has
$\int_0^\infty P^t(x,\argcdot)a(\dd t) \geq \nu_a(\argcdot)$. 
In the case we are interested in, control arguments and the hypoellipticity of
the generator imply that each compact set is petite
(see \sref{sec:smooth} for a proof of this property).

Let $L$ be the infinitesimal generator of the process, \ie the second-order
differential operator
\begin{equ}
L = \sum_{i=1}^3 \left(p_i \partial_{q_i} - u_i(q_i)\partial_{p_i} \right) 
+ \sum_{b}[w_b(q_2-q_b)(\partial_{p_b}-\partial_{p_2}) + \tau_b
\partial_{p_b} - \gamma_b p_b \partial_{p_b} + \gamma_b T_b \partial_{p_b}^2]~.
\end{equ}
Recall that for any sufficiently regular function $f$ we have 
$Lf(x)=\frac{\dd}{\dd t}\left.\left[ \int f(y) P_t(x,\dd y)\right]\right|_{t=0}$.

A classical way to control the return times to a petite set is to make use of
Lyapunov functions.
We call {\em Lyapunov function} a smooth function
 $V: \Omega \mapsto [1, \infty)$ with compact level
sets (\ie due to the structure of $\Omega$,
a function such that $V( q,  p)\to\infty$ when $\| p\| \to \infty$)
such that for all $x\in \Omega$,
\begin{equ}\label{eq:driftcondition}
(L V)(x) \leq C\ind_{K}(x) -\phi \circ V(x)~,
\end{equ}
where $C$ is a constant, $\phi : [1, \infty) \to \mathbb (0, \infty)$ is an
increasing function, and $K$ is a petite set. If one can find such a function,
and prove that some skeleton $P^\Delta (\Delta>0)$ is $\mu$-irreducible for some
measure
$\mu$ (\ie $\mu(A)>0$ implies that for all $x\in \gO$ there exists $k\in
\bbN$ such
that $P^{k \Delta}(x,A)>0$),
then the Markov process is 
ergodic, with rate depending on $\phi$. In the case where
$\phi(V) \propto V^\rho$,
the convergence is geometric if $\rho=1$
and polynomial if $\rho < 1$ (see \cite{meyn_stability_1993,douc_subgeometric_2009}).
In this paper, we obtain $\phi(V) \sim  V/\log V$. 

We rely on the work of
Douc, Fort and Guillin \cite{douc_subgeometric_2009}, which gives a sufficient
condition for subgeometric ergodicity of continuous-time Markov processes.
We give here a simplified version of their result, adapted to our
purpose. This statement is based on Theorem 3.2 and
Theorem 3.4 of \cite{douc_subgeometric_2009}.

\begin{theorem}[Douc-Fort-Guillin (2009)]\label{th:douc}
Assume that the process has an irreducible skeleton and that there exist a smooth
function $V:\Omega\rightarrow [1,\infty)$ with
$V( q,  p)\to\infty$ when $\| p\| \to \infty$, an increasing, differentiable,
concave function $\phi:[1,\infty)\rightarrow(0,\infty)$, a petite set $K$, and a
constant $C$ such that \eref{eq:driftcondition} holds. Then the process admits
a unique invariant measure $\pi$, and for each $z\in [0,1]$, there exists a
constant $C'$ such that for all $t\geq 0$ and all $x\in \Omega$,
\begin{equ}
\|P^t(x, \argcdot) - \pi \|_{(\phi \circ V)^{z}} \leq	g(t) C' V(x)~,
\end{equ}
where $g(t)=(\phi\circ H_\phi^{-1}(t))^{z-1}$,
with $H_\phi(u)=\int_1^u \frac{\dd s}{\phi(s)}$.
\end{theorem}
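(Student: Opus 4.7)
The plan is to reduce the continuous-time statement to the discrete-time framework of \cite{douc_subgeometric_2009}, since Theorems~3.2 and~3.4 of that paper are formulated for discrete Markov chains. The three main steps are: (a)~transfer the generator drift condition \eref{eq:driftcondition} to a discrete drift condition for a skeleton $P^\Delta$ with $\Delta>0$ fixed, (b)~use the irreducibility of some skeleton together with the petiteness of $K$ to deduce Harris recurrence, hence existence and uniqueness of $\pi$ via the machinery of \cite{meyn_markov_2009} (uniqueness for the continuous-time process then follows because any invariant measure for $(P^t)$ is invariant for every skeleton), and (c)~apply the two DFG theorems to obtain the prescribed convergence rate.

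For step~(a), Dynkin's formula gives
\begin{equ}
P^\Delta V(x) - V(x) = \bbE_x\int_0^\Delta LV(X_s)\,\dd s \leq - \bbE_x\int_0^\Delta \phi\circ V(X_s)\,\dd s + C\Delta~.
\end{equ}
Using the concavity of $\phi$ and Jensen's inequality, together with a Gr\"onwall-type control $\bbE_x V(X_s)\leq C_1 V(x)$ on $[0,\Delta]$ (which follows from the crude bound $LV\leq C$), one extracts an inequality of the form $P^\Delta V \leq V - c_1 \phi\circ V + c_2 \ind_{\tilde K}$, with $\tilde K$ a compact, hence petite set. Theorem~3.2 of \cite{douc_subgeometric_2009} then yields a modulated moment bound of the type $\bbE_x\bigl[\sum_{n=0}^{\sigma_{\tilde K}-1} r_\phi(n)\bigr]\leq C'' V(x)$, where $r_\phi(n) = \phi(H_\phi^{-1}(n))$ is the subgeometric rate associated with $\phi$, and Theorem~3.4 converts this, via a Nummelin splitting and coupling argument, into the discrete-time bound $\|P^{n\Delta}(x,\argcdot)-\pi\|_{(\phi\circ V)^z}\leq C''' V(x) r_\phi(n)^{z-1}$ for each $z\in[0,1]$. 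To pass back to continuous time, I would interpolate: for $t\in[n\Delta,(n+1)\Delta)$, the semigroup property combined with the uniform bound $\sup_{s\in[0,\Delta]}\bbE_x V(X_s)\leq C_1 V(x)$ controls $\|P^t(x,\argcdot)-\pi\|_{(\phi\circ V)^z}$ by a constant multiple of the skeleton bound at $n\Delta$; the slow variation of $r_\phi$ built into its definition via $H_\phi$ absorbs the rounding without changing the rate.

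The main technical obstacle is really step~(a): to commute an expectation with $\phi$ one needs concavity of $\phi$ in an essential way, and one must also ensure that $V(X_s)$ has not grown so much over $[0,\Delta]$ that $\phi\circ V(X_s)$ ceases to be comparable to $\phi\circ V(x)$ in the region where the drift is to be negative. Once that transfer is done, the remainder is a direct invocation of \cite{meyn_markov_2009} for existence/uniqueness and of the two theorems of \cite{douc_subgeometric_2009} for the rate, followed by the elementary interpolation back to continuous time outlined above.
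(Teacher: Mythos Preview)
The paper does not prove this theorem at all: it is stated as a simplified version of results from \cite{douc_subgeometric_2009} (their Theorems~3.2 and~3.4) and used as a black box. There is therefore no ``paper's own proof'' to compare against.

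Your sketch is a reasonable outline of how such a result can be established, but one point deserves correction: the Douc--Fort--Guillin paper cited here treats \emph{continuous-time} strong Markov processes directly, so the detour through a discrete skeleton and the interpolation back to continuous time that you propose is not how the original argument is organised. In \cite{douc_subgeometric_2009} the drift condition \eref{eq:driftcondition} is exploited at the level of the continuous-time generator, and the modulated-moment and coupling estimates are carried out for the process itself (via the resolvent chain and an embedded split chain), which avoids the technical step~(a) you identify as the main obstacle. Your route is not wrong in spirit---the discrete-time subgeometric theory (Douc--Fort--Moulines--Soulier) combined with a skeleton argument does yield results of this type---but since the paper simply invokes the continuous-time theorem, no such reduction is needed here.
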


When $z=0$,
we retrieve the total variation norm
$\|P^t(x, \argcdot) - \pi \|_{\mathrm{TV}}$ and the rate is the fastest. Increasing
$z$ strengthens the norm but slows the convergence rate down. When $z=1$, the
norm is the strongest, but no convergence is guaranteed since $g(t)\equiv 1$.

The core of the paper is devoted to the construction of a Lyapunov function
such that \eref{eq:driftcondition} is satisfied with $\phi(s) \sim s/\log s$,
and a set $K$ which is compact and therefore petite.
This yields a stretched exponential
convergence rate (see \eref{eq:convrate}). The existence 
of an irreducible skeleton required by \tref{th:douc}
and the fact that every compact set is
petite are proved in \sref{sec:controlirred}.

One might at first think that a Lyapunov function is simply given by the
Hamiltonian $H$. Unfortunately, this is not the case, as
\begin{equ}\label{eq:LH}
LH = \sum_{b} \big(\tau_b p_b+\gamma_b(T_b-p_b^2)\big)~,
\end{equ}
where the right-hand side remains positive when $p_1, p_3$ are small
and $p_2\to \infty$. Thus, there
is no bound of the form \eref{eq:driftcondition} for $H$. The
same problem occurs if we take any function $f(H)$ of the energy.

In order to find a {\it{bona fide}} Lyapunov function, we will need
more insight into how 
fast all \emph{three} momenta decrease. The equality \eref{eq:LH} suggests
that $p_1$ and $p_3$ will not cause any problem.
In fact, we have for $b=1,3$, that 
$$
Lp_b = -\gamma_b p_b + w_{b}(q_2-q_b) - u_b(q_b) + \tau_b~.
$$
Since $w_{b}(q_2-q_b) - u_b(q_b) + \tau_b$ is bounded,
$|p_b|$ essentially decays at exponential rate when it is large.
This is of course due to the friction terms that act on
$p_1$ and $p_3$ directly.
Such a result does not hold for $p_2$. In fact, the decay of $p_2$ is
much slower. Our main insight is that in a sense
\begin{equ}\Label{e:nminus4}
Lp_2 \sim -cp_2^{-3}~.
\end{equ}
The proof of such a relation occupies a major part of this paper.
As indicated earlier, this very slow damping of $p_2$ comes
from the lack of effective interaction when the forces oscillate 
very rapidly.
Once we have gained enough
understanding of the dynamics of $p_2$, we will be able to construct a Lyapunov
function, whose properties are summarised in
\begin{proposition}\label{prop:Lyapunov}
For all sufficiently small $\beta > 0$, there is a function
$V: \Omega \to [1,\infty)$ satisfying the two
following properties:
\begin{myenum}
\item There are positive constants $c_1, c_2$ such that
\begin{equ}\Label{eq:ineqvn}
	1 + c_1 e^{\beta H}\leq V\leq c_2(1+p_2^2) e^{\beta H}~.
\end{equ}
\item There are positive constants $c_3, c_4$ and a compact set $K$ such that 
\begin{equ}\Label{eq:ineqLvn}
	LV \leq c_3 \ind_K - \phi(V) ~,
\end{equ}
where $\phi: [1, \infty) \to (0, \infty)$ is defined by\footnote{The 2 in the denominator ensures that $\phi$
is concave and increasing on $[1, \infty)$, as required in \tref{th:douc}.}
\begin{equ}\label{eq:defphi}
\phi(s) = \frac {c_4\,s}{2+\log(s)}~.
\end{equ}
\end{myenum}
\end{proposition}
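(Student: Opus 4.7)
The plan is to construct $V$ as a modification of $e^{\beta H}$, exploiting the effective slow dynamics of the middle rotor identified in \sref{sec:p2dynamics}. Concretely, I would take
\[
V(x) \;=\; 1 + (a + \Psi(x))\,e^{\beta H(x)}\,,
\]
with $a > 0$ a small constant and $\Psi\ge 0$ a smooth function with $\Psi \le c(1+p_2^{\,2})$ supplying the extra negative drift that $e^{\beta H}$ alone fails to produce in the ``breather'' regime where $|p_2|$ is large but $p_1,p_3$ remain bounded.

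To see why a correction is needed, a direct computation gives
\[
L(e^{\beta H}) \;=\; \beta e^{\beta H}\sum_b\bigl(\tau_b p_b + \gamma_b T_b - \gamma_b p_b^{\,2}(1-\beta T_b)\bigr).
\]
For $\beta < \min_b T_b^{-1}$, this is strongly negative whenever $|p_1|$ or $|p_3|$ is large, but collapses to the positive constant $\beta(\sum_b\gamma_bT_b)\,e^{\beta H}$ in the breather regime, reflecting the heat injected by the baths that cannot be dissipated through the outer rotors when they are nearly at rest. The role of $\Psi$ is to cancel this spurious positive contribution. To build it, I would import from \sref{sec:p2dynamics} the modified momentum $\tilde p_2 = p_2 + \chi_1(q)/p_2 + \chi_2(q,p_1,p_3)/p_2^{\,3} + \cdots$, which satisfies $L\tilde p_2 \sim -c/\tilde p_2^{\,3}$ outside a compact set (the $\chi_j$ being obtained by solving successive Poisson equations in $q_2$, whose solvability is ensured by \aref{as:assumptioncoupling}). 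I would then assemble $\Psi$ as a smooth, truncated combination of powers of $\tilde p_2$ and auxiliary correctors in $q$ and $(p_1,p_3)$, tuned so that $L\Psi$, together with the It\^o cross-terms $2\beta\sum_b\gamma_bT_b\,p_b\,\partial_{p_b}\Psi$, cancels $\beta(\sum_b\gamma_bT_b)\,\Psi\,e^{\beta H}$ up to a residual negative term of the correct order.

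Once $V$ is defined, the growth bounds are immediate from the construction. For the drift inequality I would split $\Omega$ into three regions: a compact set $K$, where $LV$ is uniformly bounded and is absorbed into $c_3\ind_K$; a region where $|p_1|$ or $|p_3|$ is large, where the $-\gamma_b p_b^{\,2}(1-\beta T_b)$ term in $L(e^{\beta H})$ alone produces a drift far stronger than $\phi(V)$; and the breather region, where the cancellation described above yields
\[
LV \;\le\; -c\,\frac{e^{\beta H}}{p_2^{\,2}}\;\le\;-c'\phi(V)\,,
\]
using $\log V \sim \beta H \sim \beta p_2^{\,2}/2$ in that regime.

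The main obstacle will be the actual construction of $\Psi$ in the breather regime. The injection $\beta(\sum_b\gamma_bT_b)\Psi\,e^{\beta H}$ is of order one relative to $\Psi\,e^{\beta H}$, while the only friction available on the middle rotor is the very slow $L\tilde p_2 \sim -c/\tilde p_2^{\,3}$ produced by averaging. These two scales can only be reconciled by forcing $\Psi$ to grow like $\tilde p_2^{\,2}$, which saturates the upper bound, and by accepting a deficit that produces the logarithm in $\phi(s) = c_4 s/(2+\log s)$ and hence the stretched exponential rate $e^{-\lambda t^{1/2}}$ of \tref{thm:mainthm}. Successfully carrying out the construction will require pushing the higher-order averaging of \sref{sec:p2dynamics} far enough, and assembling the correctors cleanly enough, for the order-one cancellation to work up to a controlled remainder.
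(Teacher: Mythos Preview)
Your overall plan---split $\Omega$ into a compact region, a ``large $p_1,p_3$'' region, and a breather region, and use the averaged variable $\tilde p_2$ in the last one---matches the paper. But the specific ansatz $V=1+(a+\Psi)e^{\beta H}$ with a polynomially bounded $\Psi$ does not close, and the gap is exactly at the step you flag as the ``main obstacle''.

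In the breather regime the product rule gives
\[
L(\Psi e^{\beta H})
= e^{\beta H}L\Psi
+ \Psi\,L(e^{\beta H})
+ 2\beta e^{\beta H}\sum_b\gamma_bT_b\,p_b\,\partial_{p_b}\Psi~,
\]
and the middle term is $\beta\bigl(\sum_b\gamma_bT_b\bigr)\Psi\,e^{\beta H}+O(p_b)$, which with $\Psi\sim\tilde p_2^{\,2}$ is of order $+p_2^{\,2}e^{\beta H}$. The only dissipative input from averaging is $L\tilde p_2\sim -\alpha p_2^{-3}$, hence $L(\tilde p_2^{\,2})\sim -2\alpha p_2^{-2}$, so $e^{\beta H}L\Psi$ is of order $-e^{\beta H}/p_2^{\,2}$. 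There is no way for a correction $\Psi\le c(1+p_2^{\,2})$ to produce an $L\Psi$ of order $-p_2^{\,2}$ with only this friction. Even if you managed a miraculous cancellation of the $p_2^{\,2}$ term (say by sneaking an $e^{-\beta(p_1^2+p_3^2)/2}$ factor into $\Psi$), the unaffected piece $a\,L(e^{\beta H})\sim +c\,e^{\beta H}$ from the $a\,e^{\beta H}$ part remains, and your claimed residual $-ce^{\beta H}/p_2^{\,2}$ is too small to absorb it. Finally, note that $\phi(V)\sim e^{\beta H}$ in the breather regime (since $\log V\sim \beta p_2^{\,2}/2$ and $V\sim p_2^{\,2}e^{\beta H}$), so the inequality $-ce^{\beta H}/p_2^{\,2}\le -c'\phi(V)$ that you write is off by a factor $p_2^{\,2}$.

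The paper fixes this by taking the correction to be an exponential of the \emph{averaged} variable rather than of $H$: it sets
\[
V = 1 + e^{\beta H} + A\,\rho(p)\,\tilde p_2^{\,2}\,e^{\frac{\beta}{2}\tilde p_2^{\,2}}~,
\]
with a cutoff $\rho$ that lives in the breather region. Because $f(\tilde p_2)=\tilde p_2^{\,2}e^{\beta\tilde p_2^{\,2}/2}$ depends on $\tilde p_2$ alone, the It\=o formula uses directly the effective SDE of \pref{p:mainprop}, and the exponential amplifies the weak drift: $f'(\tilde p_2)a\sim(\beta\tilde p_2^{\,3})(-\alpha/\tilde p_2^{\,3})e^{\beta\tilde p_2^{\,2}/2}=-\alpha\beta\,e^{\beta\tilde p_2^{\,2}/2}$. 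This is of order $-e^{\beta H}$ (not $-e^{\beta H}/p_2^{\,2}$), and for $A$ large it dominates the positive $C_1e^{\beta H}$ coming from $L(e^{\beta H})$. That exponential-in-$\tilde p_2$ structure is the missing idea; with it the three-region argument you outline goes through essentially as you describe.
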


The way we construct the Lyapunov function is somewhat different
from that of \cite{hairer_slow_2009}. There, it is obtained starting
from some power of the Hamiltonian and then adding corrections by 
an averaging technique similar to ours (see \rref{rem:averagingsimilarity}).
Here, we first average the dynamics of $p_2$ and then use the result to
construct a Lyapunov function that essentially grows exponentially with the
energy. This gives a stretched exponential rate of convergence
instead of a polynomial rate as in
\cite{hairer_slow_2009}. The present method can in principle be applied to the
model of \cite{hairer_slow_2009} (see also \cite{hairer_how_2009}).

We now show how the main results follows.

\begin{proof}[Proof of \tref{thm:mainthm}] The conclusions of \tref{thm:mainthm}
immediately follow from \tref{th:douc}, \pref{prop:Lyapunov}, the technical results stated in
\pref{prop:controlstart}, and the following two observations. Consider
$0\leq \beta' < \beta$ and choose $z\in (0,1)$ such that  $\beta ' < z\beta$.
First, the function $\phi$ defined in \eref{eq:defphi} yields, in the notation of \tref{th:douc},
a convergence rate
\begin{equ}[eq:convrate]
g(t) = (\phi\circ H_\phi^{-1}(t))^{z-1} \leq c e^{-\lambda t^{1/2}}
\end{equ}
for some $c, \lambda > 0$.
Indeed, we have $H_\phi(u)= \frac 1{c_4}\int_1^u \frac {2+\log s}s \dd s =
 \frac {1}{2c_4} (\log u)^2+\frac 2{c_4} \log u$, so that
$H_\phi^{-1}(t) = \exp({(2c_4t +4)^{1/2}-2})$ and 
$(\phi\circ H_\phi^{-1}(t)) =(2c_4t +4)^{-1/2} \exp((2c_4t +4)^{1/2}-2) \geq Ce^{C't^{1/2}}$
for some $C, C'>0$. Thus, \eref{eq:convrate} holds with $\lambda = (1-z)C'$.
Secondly, by \pref{prop:Lyapunov} (i), and since  $\beta ' < z\beta$, we observe
that $e^{\beta'H} \leq c(\phi \circ V)^z$
for some constant $c>0$, so that
$\|\argcdot\|_{e^{\beta'H}} \leq c\, \|\argcdot\|_{(\phi \circ V)^{z}}$.
\end{proof}

\section{Effective dynamics for the middle rotor}\label{sec:p2dynamics}
The hardest and most interesting part of the problem
is to determine how $p_2$ decreases when it is
very large.\footnote{To simplify notation, we say $p_2$ is large, but
we always really mean that $|p_2|$ is large.} In this section, we obtain
some asymptotic, effective
dynamics for $p_2$ when $p_2\to \infty$. 

\subsection{Expected rate}

Before we start making any proof, we can get a hint of how $p_2$ decreases in
the regime where $p_2$ is very large and both $p_1, p_3$ are small. Assume for
simplicity that $u_i \equiv 0$ and that $W_b(s) = -\kappa  \cos(s)$ so that
$w_b(s) = \kappa \sin(s)$. In the regime of interest, we expect the middle
rotor to decouple, so that $p_2$ will evolve very slowly.
We will consider the system over times that are small enough for $p_2$ to remain
almost constant (say equal to $\omega$), 
but large enough for some
``quasi-stationary'' regime to be reached. 
The reader can think of $\omega $ as being the ``initial'' value of $p_2$.
For $b=1,3$, we expect $p_b$ to be
well approximated, at least qualitatively, by the equation
\begin{equ}\Label{eq:smallmodel}
\dd p_b = \kappa \sin(\omega  t)\dt - \gamma_b p_b\dt  + \sqrt{2\gamma_b
T_b}\,\dd
B^b_t~,
\end{equ}
whose solution is
\begin{equ}
\begin{aligned}
	p_b(t) &= \kappa \frac {\gamma_b \sin(\omega t)-\omega\cos(\omega
		t)}{\gamma_b^2+\omega^2}  + \sqrt{2\gamma_b T_b}\int_0^t e^{\gamma_b(s-t)}\dd
	B^b_s\\
	&= -\kappa \frac {\cos(\omega t)}{\omega}  + \sqrt{2\gamma_b T_b}\int_0^t
	e^{\gamma_b(s-t)}\dd B^b_s+  \mathcal O\!\left(\frac 1 {\omega^2}\right)~.
\end{aligned}
\end{equ}
We have neglected the exponentially decaying part $ p_b(0)e^{-\gamma_b
t}$ since we assume that a quasi-stationary regime is reached. By
\eref{eq:LH}, the rate of energy flowing {\em into} of the system at $b$ is
$\gamma_b (T_b - p_b^2)$. Squaring $p_b$ and taking expectations, what remains
is
\begin{equ}
\begin{aligned}
	\mathbb E p_b^2(t) &= \kappa^2 \frac{ \cos^2(\omega t)}{\omega^2 } +
	2\gamma_bT_b \mathbb E\left(\int_0^t e^{\gamma_b(s-t)} \dd B^b_s \right)^2 + 
	\mathcal O\!\left(\frac 1 {\omega^3}\right)\\
	&= \kappa^2 \frac{ \cos^2(\omega t)}{\omega^2 } +(1-e^{-2\gamma_b t})T_b + 
	\mathcal O\!\left(\frac 1 {\omega^3}\right)~,
\end{aligned}
\end{equ}
where we have used the It\=o isometry $\mathbb E(\int_0^t e^{\gamma_b(s-t)} \dd
B^b_s)^2 = \int_0^t e^{2\gamma_b(s-t)} \dd s $. Neglecting again an
exponentially decaying term, we obtain
\begin{equ}\label{eq:decayapprox}
\frac \dd {\dt} \mathbb E H(t) = \sum_{b}\mathbb E\left(\gamma_b (T_b -
p_b^2(t))\right) \sim
-\sum_{b}\gamma_b \kappa^2\frac{\cos^2(\omega t)}{\omega^2}~.
\end{equ}
Since $\cos^2(\omega t)$ oscillates very rapidly around its average $1/2$,
we expect to see an effective contribution $ -\frac {\gamma
\kappa^2}{2\omega^2}$. This approximation was obtained by assuming that $p_2$ is
almost constant and equal to $\omega $. Now, when $p_2$ is very large, the
energy $H$ is dominated by
the contribution $\frac 12 p_2^2$, so that we expect to have $\frac \dd {\dt}
\mathbb EH \sim  p_2
\frac \dd {\dt} \mathbb Ep_2$. Comparison with \eref{eq:decayapprox} leads to
\begin{equ}\Label{eq:expectedorder}
\frac \dd {\dt} \mathbb Ep_2 \sim -\frac 1 {p_2^3}\sum_{b}\frac {\gamma_b
\kappa^2}{2}~.
\end{equ}
We will obtain this result rigorously in \pref{p:mainprop}. 

\subsection{Notations}

Let $\Omega^\dagger=\{( q, p)\in \Omega:\, p_2\neq 0\}$. 
We denote throughout by $X_t = ( q(t), 
p(t))$ the solution of the stochastic differential equation
\eref{eq:sde} with initial condition $X_0 = ( q(0),  p(0))$.
For now, we restrict ourselves to  
$X_0\in \Omega^\dagger$ since we aim to obtain
an effective dynamics for the middle rotor by performing an
expansion in negative powers of $p_2$.
Remark that since $\frac{\dd}{\dd t} p_2$ is bounded, there is for each initial
condition 
$X_0 \in \Omega^\dagger$ a deterministic time $t^* > 0$ (proportional to
$|p_2(0)|$)
such that $X_t \in \Omega^\dagger$ for all $t\in [0, t^*)$ and 
all realisations of the random noises.
To define a smooth Lyapunov function on the whole space
 $\Omega$ we will perform a regularisation in \sref{sec:Lyapunovconstr}.

\begin{definition}\label{def:ensembleU}
We let $\cU$ be the set of stochastic processes $u_t$
which are solutions of an SDE of the form
\begin{equ}\label{e:long}
	\dd u_t = f_1(X_t)\dd t + f_2(X_t)\dd B^1_t+f_3(X_t) \dd B^3_t~,
\end{equ}
for some functions
$f_i:\Omega\to\bbR$.
\end{definition}

{\bf{Notation}}: In the sequel, we write 
\begin{equ}
	\dd u_t = f_1\dd t + f_2\dd B^1_t+f_3 \dd
        B^3_t
\end{equ}
instead of \eref{e:long}.

For any smooth function $h$ on $\Omega$, the stochastic process $h(X_t)$ is in $\cU$ by
the It\=o formula (see below). Without further mention, we will both see $h$ as
a
function on $\Omega$ and as the stochastic process $h(X_t)$. When referring to
the stochastic process, we shall write simply $\dd h$ instead of $\dd h(X_t)$.
Of course, only very few processes in $\cU$ can be written in the form $h(X_t)$
for some function $h$ on $\Omega$.

The variables $p_2$ and $q_2$ will play a special role, as we are merely
interested in the regime where $p_2$ is very large. For any function $f$ over
$\Omega$ we call the quantity
$$
\avg f = {\frac 1{2\pi}}\int_{0}^{2\pi }f\, \dd q_2
$$
the {\em $q_2$-average} of $f$ (or simply the {\em average} of $f$), which is a
function of $ p$, $q_1$ and $q_3$ only.
\begin{assumption}\label{ass:moyennenulle} We assume 
$$\avg{U_2} = 0\qquad \text{and} \quad\avg{W_b} = 0,\qquad b=1,3 ~.$$
\end{assumption}
This assumption merely fixes the
additive constants of the potentials and therefore
results in no loss of generality.

For conciseness, we shall omit the arguments of the potentials and forces,
always assuming that
\begin{align*}
W_b &= W_b(q_2-q_b)~, & w_b &= w_b(q_2-q_b)~, & b&=1,3~,\\
U_i &= U_i(q_i)~, & u_i &= u_i(q_i)~, & i&=1,2,3~.
\end{align*}

To simplify the notations, we also introduce the potentials $\Phi_1$, $\Phi_2$,
$\Phi_3$ associated to the three rotors, and the corresponding forces $\phi_1,
\phi_2, \phi_3$ defined by
\begin{align}\label{eq:defphis}
\Phi_b &= W_b + U_b~ ,  & \phi_b &= -\partial_{q_b}\Phi_b = w_b - u_b~, &
b=1,3~,\notag\\ 
\Phi_2 &= W_1 + W_3 + U_2~, & \phi_2 &= -\partial_{q_2}\Phi_2 = -w_1 - w_3 -
u_2~.
\end{align}
Of course, $\Phi_i$ and $\phi_i$ are functions of $  q$ only. With these
notations the dynamics reads more concisely
\begin{align*}\Label{eq:sdere}
\dd q_i&=p_i\dt~, & i=1,2,3~,\\
\dd p_2&=\phi_2 \dt~,\\
\dd p_b&=\Big(\phi_b +\tau_b-\gamma_b p_b\Big)\dt +\sqrt{2\gamma_b T_b}\,\dd
B^b_t, & b=1,3~.
\end{align*}
We will mainly deal with functions of the form $p_2^\ell p_1^np_3^m g(
q)$ and their linear combinations. We therefore introduce the notion
of degree.
\begin{definition}\label{d:degree}We say that a function $f$ on 
$\Omega^\dagger$ has {\em degree} $\ell
\in \mathbb Z$ if it can be written as a  {\em finite} sum of elements of the
kind $p_2^\ell p_1^np_3^m g( q)$ for some $n, m \in \mathbb N$
and a smooth function $g:\bbT^3\to\bbR$. Moreover,
we denote
$$
\bigoh{\ell}
$$
a generic expression of order at most $\ell $ (which can vary from line to
line),
\ie a finite sum of functions of degree $\ell$, $\ell-1$,
$\ell-2$,\dots.
\end{definition}
We have by the It\=o formula that for any smooth function $f$ on $\Omega$
\begin{equ}\Label{eq:itoformshort}
\begin{aligned}
	\dd f & = \sum_{i=1}^3 \left(   \mpart{f}{q_i} \dd q_i +\mpart{f}{p_i} \dd
	p_i\right) + \sum_{b}\gamma_b T_b \frac{\partial^2 f}{\partial p_b^2}\dt\\
	& = \dd^+ f + \dd^0f + \dd^- f~,\\
\end{aligned}
\end{equ}
where
\begin{equ}\label{e:ddd}
\begin{aligned}
	\dd^+ f &=  p_2 \mpart{f}{q_2}\dt~,\\
	\dd^- f &=  \phi_2\mpart{f}{p_2}\dt~,\\
	\dd^0 f &=  \sum_{b}\left(p_b \mpart{f}{q_b} +(\phi_b+ \tau_b- \gamma_b
	p_b)\mpart{f}{p_b}  + \gamma_b T_b \frac{\partial^2f}{\partial
		p_b^2}\right)\dt   +  \sum_{b}
	\sqrt{2\gamma_b T_b}\mpart{f}{p_b} \dd B^b_t~.
\end{aligned}
\end{equ}
(By the discussion following \dref{def:ensembleU}, $f$, its partial derivatives,
$p_2$ and the functions $\phi_i$ in this SDE are evaluated on the trajectory
$X_t$.)
Observe that when acting on a function of degree $\ell$, the contribution $\dd
^+$ increases the degree of $p_2$ by one, while  $\dd^0$ and $\dd^-$
respectively leave it unchanged and decrease it by one. In this sense, we will
see $\dd^+$ as the ``dominant'' part of $\dd$.

\subsection{General idea}
\label{subs:generalidea}
In this section we introduce the main idea, which consists in successively
removing oscillatory terms order by order in the dynamics of $p_2$. We
perform here 
the first step of the method in a somewhat naive, but pedestrian way. In the
next two sections, we systematise the method and apply it.

We begin by looking at the equation
\begin{equ}\label{eq:startingpointp2}
\dd p_2 = \phi_2 \dt~.
\end{equ}
When $p_2$ is large while $p_1$ and $p_3$ are small, the right-hand
side is highly 
oscillatory and its time-average is almost zero, since $\avg{\phi_2} = 0$. We
will proceed to a change of variable in order to ``see through'' this
oscillatory term. 
 
We first make the relation between the time-average and the $q_2$-average more
precise. Consider some function $g$ on $\Omega$. In the regime where $p_2$ is
very large and $p_1$, $p_3$ are small, the only fast variable is 
$q_2$.
Now consider some interval of time $[0,T]$ short enough so that the other
variables do not change significantly, but still large enough for $q_2$ to swipe
through $[0, 2\pi)$ many times.
We have in that case $q_2(t) \sim q_2(0) + p_2(0) t$ (remember that $q_2$ is
defined modulo $2\pi$) and
\begin{equ}\label{eq:approxp2t}
g( q(t),  p(t)) \sim g\big(q_1(0), q_2(0)+p_2(0) t, q_3(0),
 p(0)\big)~,
\end{equ}
so that the time-average of $g$ is expected to be very close to the
$q_2$-average $\avg g$.

Now, we want to estimate $p_2(t) =
\int_{0}^t \phi_2( q(s)) \dd s$ in this situation. Approximating $\phi_2$ as in
\eref{eq:approxp2t} and integrating formally with respect to time (remember that
$\phi_2 = -\partial_{q_2} \Phi_2$) leads naturally to the decomposition
\begin{equ}\label{eq:modifierp2}
p_2 = \bar p_2 -\frac {\Phi_2(q)}{p_2}~,
\end{equ}
which consists in writing $p_2$ as sum of an oscillatory term $
{\Phi_2}/{p_2}$ which is supposed to capture ``most'' of the oscillatory
dynamics, and some (hopefully) nicely behaved ``slow'' process $\bar p_2$. And
indeed, if we differentiate \eref{eq:modifierp2} we get
\begin{equ}\label{eq:decompp2naive}
\begin{aligned}
	\dd \bar p_2  &=\dd\left( p_2 + \frac {\Phi_2}{p_2}  \right)\\&= \dd p_2 +
\dd^+
	\frac{\Phi_2}{p_2} + \dd^0\frac{\Phi_2}{p_2} + \dd^-\frac{\Phi_2}{p_2}\\
	& =\phi_2\dt - \phi_2 \dt -\left(\frac{p_1w_1}{p_2}+ \frac{p_3w_3}{p_2}
	\right)\dt  -  \frac {\phi_2\Phi_2}{p_2^2}\dt\\
	& = \hphantom{\phi_2\dt - \phi_2 \dt} -\left(\frac{p_1w_1}{p_2}+
\frac{p_3w_3}{p_2} \right)\dt  -  \frac
	{\phi_2\Phi_2}{p_2^2}\dt~.
\end{aligned}
\end{equ}
As a result, we have a new process $\bar p_2$ which is asymptotically equal to
$p_2$ in the regime of interest, and whose dynamics involves only terms that are
small when $p_2$ is large, so that $\bar p_2$ is indeed a slow variable.
Observe that the choice of adding ${\Phi_2}/{p_2}$ to $p_2$ has the
effect that $\dd^+( {\Phi_2}/{p_2}) = -\phi_2\dt$, which precisely cancels
the right-hand side of \eref{eq:startingpointp2} while the remaining
terms have negative powers of $p_2$. This observation is the starting
point of the systematisation of the method.

Unfortunately, \eref{eq:decompp2naive} is not good enough to
understand how $\bar 
p_2$ (and therefore $p_2$) decreases in the long run, since the dynamics
\eref{eq:decompp2naive} of $\bar p_2$ still involves oscillatory terms. The idea
is therefore to eliminate these oscillatory terms by
absorbing them into a further change of variable $\bar p_2 = \bar{\bar p}_2 + G$
for some suitably chosen $G$. The result is that $\dd \bar{\bar p}_2$ is
a sum of terms of degree $-2$ at most, which turn out to be still oscillatory.
This procedure must then be iterated, successively eliminating oscillatory terms order by
order, until we get some dynamics that has a non-zero average (which happens after
finitely many steps).
We will follow this idea, but in a way
that does not require to write the successive changes of variable explicitly.
More precisely, we will prove 

\begin{proposition}\label{p:mainprop}
There is a function $F = \frac {\Phi_2(q)}{p_2} + \bigoh{-2}$ such that whenever
$p_2(t)  \neq 0$ the
process $\tilde p_2(t) = p_2(t) + F(X_t)$ satisfies
\begin{equ} \label{e:p1nm4equiv}
	\dd \tilde p_2(t) =  a(X_t) \dt + \sum_b  \sigma_b(X_t)\dd B^b_t ~,
\end{equ}
with
\begin{equ}
	\begin{aligned}
		a( q,  p) &= -     \frac { {\gamma_1  \avg{W_1^2} + \gamma_3 \avg{W_3^2}}}
{p_2^3} 
		+ \bigohneg{4} ~ ,\\
		\sigma_b( q,  p) & = \frac{\sqrt{2\gamma_bT_b}{W_b}}{
			p_2^2}+\bigohneg{3}~, \qquad b=1,3~.
	\end{aligned}
\end{equ}
(By \aref{ass:moyennenulle}, no arbitrary additive constant appears in
$\avg{W_1^2} $ and $\avg{W_3^2}$.)
\end{proposition}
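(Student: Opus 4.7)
The plan is to construct $F$ as a finite sum $F = G_1 + G_2 + G_3 + \cdots$, where each $G_k$ has degree $-k$ in the sense of \dref{d:degree}, by iteratively removing the leading-order oscillatory contributions in the drift of $\tilde p_2 = p_2 + F$. By \eref{e:ddd}, $\dd^+$ raises the degree by one while $\dd^0$ and $\dd^-$ stay at the same degree or lower it. Hence any drift term $A\dt$ of degree $-\ell$ with $\avg A = 0$ can be cancelled by adding a correction $G$ of degree $-(\ell+1)$, taking $G = -B/p_2$ with $B$ the zero-$q_2$-mean antiderivative of $A$ in $q_2$; then $\dd^+ G = -A\dt$, while $\dd^0 G$ and $\dd^- G$ contribute only at lower degrees. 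The procedure is iterated until the leading residual drift has a nonzero $q_2$-average, which turns out to occur at degree $-3$.

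The base step $G_1 = \Phi_2/p_2$ is performed in \eref{eq:decompp2naive}, and the resulting degree $-1$ drift $-(p_1 w_1+p_3 w_3)/p_2$ has zero $q_2$-average, so we take $G_2 = (p_1 W_1 + p_3 W_3)/p_2^2$. The Itô formula \eref{e:ddd} applied to $G_2$ produces the desired cancellation and, in addition, the martingale $\sum_b \sqrt{2\gamma_b T_b}(W_b/p_2^2)\,\dd B^b_t$, which already gives the leading $\sigma_b$ of the statement. A direct computation shows that the resulting degree $-2$ drift, comprising $\dd^-G_1 = -\phi_2\Phi_2/p_2^2\,\dt$ and the non-Brownian part of $\dd^0 G_2$, has zero $q_2$-average; the identities needed, namely $\avg{w_b W_b}=0$, $\avg{\phi_b W_b}=0$, and $\avg{\phi_2\Phi_2}=0$, all follow from $\avg{W_b}=\avg{U_2}=0$ and integration by parts in $q_2$. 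We then construct $G_3$ of degree $-3$ cancelling these oscillatory terms.

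The first nonvanishing $q_2$-average appears at degree $-3$ and yields the announced leading drift. Among the pieces of $G_3$, the one cancelling the friction-generated term $-\gamma_b p_b W_b/p_2^2$ from $\dd^0 G_2$ is $\gamma_b p_b V_b/p_2^3$, where $V_b$ denotes the zero-mean antiderivative of $W_b$ in $q_2$. Consequently $\partial_{p_b}G_3$ contains $\gamma_b V_b/p_2^3$, and the term $(\phi_b + \tau_b - \gamma_b p_b)\partial_{p_b}G_3$ in $\dd^0 G_3$ contributes, after $q_2$-averaging and using $\avg{W_b}=\avg{V_b}=\avg{\phi_b W_b}=0$, the amount $\gamma_b \avg{\phi_b V_b}/p_2^3$; integration by parts gives $\avg{\phi_b V_b}=\avg{w_b V_b}=-\avg{W_b\partial_{q_2}V_b}=-\avg{W_b^2}$, and summing over $b$ produces precisely $-(\gamma_1\avg{W_1^2}+\gamma_3\avg{W_3^2})/p_2^3$. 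The remaining contributions at degree $-3$ from $p_b\partial_{q_b}G_3$ and $\gamma_b T_b\partial_{p_b}^2 G_3$ have zero $q_2$-average by the same mechanism, while the nonzero-mean piece $-2\sum_b p_b\avg{\phi_2 W_b}/p_2^3$ coming from $\dd^-G_2$ can be written as $-\sum_b p_b\partial_{q_b}\tilde h(q_1, q_3)/p_2^3$ for some smooth $\tilde h$ (integrability follows from $\avg{w_1 W_3}+\avg{w_3 W_1}=0$ and the fact that $\avg{u_2 W_b}$ depends only on $q_b$); adding $\tilde h/p_2^3$ to $G_3$ absorbs it at the cost of contributions of degree $\leq -4$.

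The main obstacle is the algebraic book-keeping across degrees: one must verify at each step, using integration by parts on $\bbT$, that the leading residual drift has the expected $q_2$-mean, and that nonzero-mean residues (as at degree $-3$) satisfy the integrability conditions needed to be absorbed into the iteration. Once $F$ is assembled as a finite sum with $F = \Phi_2/p_2 + \bigoh{-2}$, applying the Itô formula to $\tilde p_2 = p_2 + F$ yields \eref{e:p1nm4equiv}, the remainders collected into $\bigohneg{4}$ for $a$ and $\bigohneg{3}$ for $\sigma_b$. Since $F$ involves only negative powers of $p_2$, the identity holds on $\Omega^\dagger$, consistent with the hypothesis $p_2(t)\neq 0$.
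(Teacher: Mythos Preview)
Your proof is correct and follows essentially the same iterative averaging strategy as the paper's proof in \sref{sec:proof prop}: the paper packages the iteration via the operator $Q$ of \lref{lem:remplaceravg} and extracts the leading coefficient through the identity $\avg{w_b\partial_{p_b}(QI)} = -p_2^{-1}\avg{W_b\partial_{p_b}I}$, while you track the explicit corrections $G_k$ and isolate the friction piece $\gamma_b p_b V_b/p_2^3$ of $G_3$ as the source of the leading drift. Your absorption of the nonzero-mean degree $-3$ residue via $\tilde h/p_2^3$ corresponds exactly to the paper's treatment of the term $J$, with $\tilde h = \avg{\Phi_2^2}$.
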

The next two sections are devoted to proving \pref{p:mainprop}.

\subsection{Averaging}\label{sec:average}

The crux of our analysis is to average oscillatory terms in the dynamics. This is a 
well known problem in differential equations. In classical averaging theory
\cite{sanders2007averaging,vela_averaging_2003}, it is an {\em external} small
parameter $\varepsilon$ that gives the time scale of the fast variables. Here,
the role of $\varepsilon$ is played by $ 1/{p_2}$, which is a dynamical
variable. We develop an averaging theory adapted to this case, and
also to the stochastic nature of the problem.

The starting point is as follows. Imagine that for a function $h$ on $\Omega$ we
find an expression of the kind
\begin{equ}\label{eq:dh}
\dd h = f\dt + \dd r_t~,
\end{equ}
for some function $f = f(X_t)$ of degree $\ell$ and some stochastic process $r_t
\in \cU$ (see \dref{def:ensembleU}) which denotes the part of the dynamics that
we do not want to interfere with. Thinking of $f(X_t)$ as a highly oscillatory
quantity when $p_2$ is very large, we would like to write $h = \bar h + F$ for
some small function $F$ on $\Omega$ such that 
\begin{equ}\label{eq:simplifierh}
\dd \bar h   =  \avg f \dt +  \dd r_t  + \text{small corrections}\,,
\end{equ}
where the notion of {\em small} will be made precise in terms of powers of
$p_2$. That is, we want to find some $\bar h$ close to $h$, such that its
dynamics involves, instead of $f\dt$, the $q_2$-average $\avg f\!\dt$ plus some
smaller corrections. In other words, we are looking for some $F$ such that
$$
\dd F = \dd(h-\bar h) = (f-\avg f)\dt + \text{small corrections}\,.
$$
Remembering that in terms of powers of $p_2$,  $\dd^+$ is the dominant part of
$\dd$, the key is to find some $F$ such that $\dd^+ F = (f-\avg
f)\dt$. If we write $L^+=p_2\partial_{q_2}$ , we have
$\dd^+ F=L^+ F \dd t$. Thus, we really need to invert $L^+$ (which is in fact
the dominant part of the generator $L$ when $p_2$ is large). 

We call here $\KK$ the space of smooth 
functions $\Omega^\dagger \to \mathbb R$, and we denote by $\KK_0$ the space
of functions $f\in \KK$ such that $\avg{f}=0$.
Note that $L^+$ maps $\KK$ to $\KK_0$ since for all $f\in\KK$, we have by
periodicity
$$
\avg{L^+f}=p_2\avg{\partial_{q_2} f}=0~.
$$
We can define a right inverse $(L^+)^{-1} : \KK_0 \to \KK_0$ by letting for all
$g\in \KK_0$
\begin{equ}\Label{eq:Flmoin1}
 (L^+)^{-1}g = \frac 1 {p_2}\left( \int g \,\dd q_2+c( p, q_1, q_3)\right)~,
\end{equ}
where the integration ``constant'' $c( p, q_1, q_3)$ is
uniquely defined by requiring that $\avg{(L^+)^{-1}g} = 0$. 

This leads naturally to the following
\begin{definition} For any function $f\in\KK$,
  we define the operator  $Q: \KK \to \KK_0$ by
\begin{equ}\Label{eq:defLpm1}
	Qf = (L^+)^{-1} (f-\avg{f})~.
\end{equ}
\end{definition}
\begin{remark}\hfill
\begin{myitem}
\item If $f$ is a function of degree $\ell$, then $Qf$ is of degree $\ell-1$.
\item By construction,
\begin{equ}\label{eq:propQf}
	\dd(Qf) = (f-\avg f)\dt  + \dd^0(Qf) + \dd^-(Qf)~.
\end{equ}
\item Moreover, by definition, $Qf$ is the only function such that
\begin{equ}\label{eq:redefQf}
	\partial_{q_2}(Qf) =\frac{f-\avg{f}}{ p_2} \quad \text{and}\quad 
	\avg{Q f}=0~.
\end{equ}
\end{myitem}
\end{remark}

Therefore, if \eref{eq:dh} holds for some $f$ of degree $\ell$, then we obtain a
quantitative expression for \eref{eq:simplifierh}, namely 
\begin{equ}\Label{eq:simplifierh2}
	\dd (h - Qf) =\avg f \dt +   \dd r_t  - \dd^0(Qf) - \dd^-(Qf)~,
\end{equ}
where the corrections are small in the sense that $Qf$, $\dd^0(Qf)$
and $\dd^-(Qf)$ have degree respectively $\ell-1$, $\ell-1$ and $\ell-2$. 

\begin{remark}
Observe that \eref{eq:modifierp2} can be written
now as $p_2 = \bar p_2 + Q\phi_2$, since $Q\phi_2 = -{\Phi_2}/{p_2}$. Thus,
the ``naive'' correction we added in \eref{eq:modifierp2} also follows
from the systematic method we have just introduced.
This is no surprise: the naive correction in \eref{eq:modifierp2} was motivated
by the approximation \eref{eq:approxp2t} in which only $q_2$ moves,
which corresponds to considering only $\dd^+$.
\end{remark}

\begin{remark}\label{rem:averagingsimilarity}Our averaging procedure is inspired by techniques of
\cite{hairer_slow_2009}. There, the equivalent of
$L^+$ is the generator 
$-q_2^{2k-1}\partial_{p_2} + p_2 \partial_{q_2}$ of the free dynamics of the
middle oscillator, where  ${q_2^{2k}}/({2k})$ is the pinning potential. In
their case, one cannot explicitly invert $L^+$, but one can show that
$(L^+)^{-1}$ basically acts as a division by $E_2^{\frac 1 2-\frac 1{2k} }$,
where $E_2$ is the energy of the middle oscillator. Again, taking formally the
limit $k\to \infty$, one obtains that $(L^+)^{-1}$ acts as a division by $
\sqrt{E_2}$, much like in our case where $(L^+)^{-1}$ acts as
a division by ${p_2} \sim \sqrt{E_2}$.
\end{remark}

We now restate our averaging method as the following lemma, which follows from
a trivial rearrangement of the terms in \eref{eq:propQf}.  

\begin{lemma}\label{lem:remplaceravg} (Averaging lemma) Consider some function
$f = \bigoh{\ell}$ for some $\ell\in\bbZ$. Then
\begin{equ}
        f\dt  =  \langle f\rangle \dt - \dd^0\left( Qf\right)-
\dd^-\left(Qf\right)+ \dd(Qf)~,
\end{equ}
where $\dd^0(Qf)$ is of degree $\ell-1$ at most and $\dd^-(Qf)$ is
of degree $\ell-2$ at most. 
\end{lemma}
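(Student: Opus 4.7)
The plan is to observe that the identity in the lemma is nothing but a rearrangement of equation \eref{eq:propQf} in the remark just before the statement, plus a degree-counting argument based on the explicit formula \eref{eq:Flmoin1} for $(L^+)^{-1}$.

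First I would establish \eref{eq:propQf} cleanly from the definitions. By \eref{eq:redefQf} we have $\partial_{q_2}(Qf) = (f-\avg{f})/p_2$, so the definition of $\dd^+$ in \eref{e:ddd} gives
\begin{equ}
\dd^+(Qf) \;=\; p_2\,\partial_{q_2}(Qf)\dt \;=\; (f-\avg{f})\dt~.
\end{equ}
Combining with the It\=o decomposition $\dd(Qf) = \dd^+(Qf)+\dd^0(Qf)+\dd^-(Qf)$ from \eref{eq:itoformshort} and rearranging yields
\begin{equ}
f\dt \;=\; \avg{f}\dt + \dd(Qf) - \dd^0(Qf) - \dd^-(Qf)~,
\end{equ}
which is the claimed identity.

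Next I would verify the two degree bounds. Writing $f$ as a finite sum of terms $p_2^{\ell'}p_1^n p_3^m g(q)$ with $\ell'\leq \ell$, the formula \eref{eq:Flmoin1} produces a factor $1/p_2$, so each term of $f-\avg f$ is sent to one with degree lowered by one. Hence $Qf$ has degree at most $\ell-1$. Now inspecting \eref{e:ddd}:
\begin{myitem}
\item $\dd^0(Qf)$ involves only multiplications by quantities that depend on $q$ and on $p_1,p_3$ (namely $p_b$, $\phi_b$, $\tau_b$, $\gamma_b p_b$, $\gamma_b T_b$, and the Brownian coefficients $\sqrt{2\gamma_b T_b}\,\partial_{p_b}(Qf)$), together with derivatives $\partial_{q_b}$, $\partial_{p_b}$, $\partial_{p_b}^2$ for $b=1,3$. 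None of these raises the power of $p_2$, so $\dd^0(Qf)$ is of degree at most $\ell-1$.
\item $\dd^-(Qf) = \phi_2\,\partial_{p_2}(Qf)\dt$; since $\phi_2$ depends only on $q$ and $\partial_{p_2}$ lowers the $p_2$-degree by one, $\dd^-(Qf)$ has degree at most $\ell-2$.
\end{myitem}

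There is essentially no obstacle: the statement is a reformulation of the defining property of $Q$ and a straightforward book-keeping of powers of $p_2$. The only point that deserves care is reading the Brownian increments in $\dd^0(Qf)$ as a stochastic object whose coefficients (not the process itself) have the claimed degree, so that the assertion is consistent with \dref{d:degree}, which is stated for deterministic functions on $\Omega^\dagger$.
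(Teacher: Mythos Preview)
Your proof is correct and follows exactly the paper's approach: the paper states that the lemma ``follows from a trivial rearrangement of the terms in \eref{eq:propQf}'', and you carry out precisely this rearrangement together with the degree count already noted in the remark preceding the lemma. Your additional care in checking how $\dd^0$ and $\dd^-$ act on degrees is a faithful elaboration of what the paper leaves implicit.
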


We now prove \pref{p:mainprop} by using \lref{lem:remplaceravg} repeatedly.

\subsection{Proof of \pref{p:mainprop}}\label{sec:proof prop}

We make the following observations,
which we will use without reference. For any function $f$ on $\Omega^\dagger$ 
that is smooth in $q_2$, we have by periodicity
\begin{equ}\label{eq:dermoyennenulle}
\avg{ \partial_{q_2}f} = 0~.
\end{equ}
Moreover, if $g$ is another such function, then we can integrate by parts to
obtain 
\begin{equ}\Label{eq:parparties}
\avg {(\partial_{q_2}f) g}= -\avg{ f \partial_{q_2}g}~.
\end{equ}
Furthermore, we have by \aref{ass:moyennenulle}, \eref{eq:defphis} and
\eref{eq:dermoyennenulle} that
\begin{equ}\Label{eq:toutmoyennenulle}
\avg{W_b } =\avg{w_b } =\avg{\Phi_2 } =\avg{\phi_2 } = 0~.
\end{equ}

We start by doing again the
first step, which we did in \sref{subs:generalidea}, but this time using the new
toolset. In order to average the right-hand side of
\begin{equ}
	\dd p_2 = \phi_2 \dt~,
\end{equ}
we use \lref{lem:remplaceravg} with $f = \phi_2$, which is of order $0$. We have
$\avg f = 0$ and $Qf =  -{\Phi_2}/{p_2}$ (by definition of $\phi_2$ and
$\Phi_2$). We obtain
\begin{equ}\label{eq:2emescorrections}
	\begin{aligned}
		\dd p_2 &  =   \dd^0\left( \frac{\Phi_2}{p_2}\right)+ \dd^-\left(
		\frac{\Phi_2}{p_2}\right) -\dd\left( \frac{\Phi_2}{p_2}\right) \\
		&=  \frac{1}{p_2}\sum_b {p_b}\mpart{\Phi_2}{q_b}\dt -
\frac{\phi_2\Phi_2}{p_2^2}
		\dt -\dd\left( \frac{\Phi_2}{p_2}\right)\\
		&= -\frac{1}{p_2}\sum_b {p_b}w_b\dt -\frac{\phi_2\Phi_2}{p_2^2} \dt
		 -\dd\left( \frac{\Phi_2}{p_2}\right)~.
	\end{aligned}
\end{equ}
This is exactly what we found in \eref{eq:decompp2naive}. 
We deal next with 
the terms $-{p_b}w_b/p_2\,\dt$ in \eref{eq:2emescorrections}. Using
\lref{lem:remplaceravg} with $f={p_b}w_b/p_2$ (and therefore with
$Qf= {p_b}W_b/p_2^2$), we find, since  $\langle f\rangle =
 {p_b}\avg{w_b}/p_2=0$, that for $b=1,3$, 
\begin{equ}\label{eq:intermpbwb}
	\begin{aligned}
		\frac{{p_b}w_b}{p_2}\dt &= -\frac{1}{p_2^2} \left[-p_b^2w_b + (\phi_b + \tau_b
		-\gamma_bp_b){W_b}  \right]\dt \\
		& \qquad   - \frac{1}{p_2^2} {\sqrt{2\gamma_bT_b}{W_b}}\dd B^b_t +
		\frac{2}{p_2^3}{p_b}W_b\phi_2 \dt+ \dd \bigohneg{2}~,
	\end{aligned}
\end{equ}
where here and in the sequel, we denote by $\dd \bigoh{k}$
any generic expression of the kind $\dd w(X_t)$
for some function $w=\bigoh{k}$ on $\Omega$. Here $\dd \bigohneg{2} = \dd (
{p_b}W_bp_2^{-2})$. Substituting \eref{eq:intermpbwb} into \eref{eq:2emescorrections} leads to
\begin{equ}\label{eq:3emescorrections}
	\begin{aligned}
		\dd p_2& = I\dt + J\dt + \frac{1}{p_2^2}\sum_b \sqrt{2\gamma_bT_b}{W_b}\dd
B^b_t
		+ \dd \left(-\frac{\Phi_2}{p_2}
 +\bigohneg{2}\!\right) ~,
	\end{aligned}
\end{equ}
with
\begin{equ}
	\begin{aligned}
		I &=- \sum_b\frac{{p_b^2}w_b -(\phi_b + \tau_b -\gamma_bp_b){W_b}}{p_2^2}
		-\frac{\phi_2\Phi_2}{p_2^2}~,\\
		J & = \frac{2}{p_2^3}\sum_b p_bW_b{\phi_2}~.
	\end{aligned}
\end{equ}
We next deal with the terms $I\dt $ and $J\dt$. 

First,
we show that $\avg I = 0$. It is immediate that $\avg{p_2^{-2} p_b^2 w_b}$ and
$\avg{p_2^{-2} (\tau_b - \gamma_bp_b)W_b}$ are zero. Moreover, $\avg{p_2^{-2}
	\phi_2\Phi_2} = -\frac 1 2p_2^{-2}\avg{\partial_{q_2}\Phi_2^2} = 0$.  Thus,
\begin{equ}
	\begin{aligned}
		\avg I &= \sum_b\avg{\frac{1}{p_2^2}\phi_bW_b} = \sum_b\avg{\frac{w_b -
				u_b}{p_2^2} W_b} \\
		& = - \sum_b\left( \frac{ \avg{\partial_{q_2}W_b^2}  }{2p_2^2}+\frac{ u_b
			\avg{W_b} }{p_2^2}\right)= 0~.
	\end{aligned}
\end{equ}
Since $I$ is of order $-2$ and $\avg I
= 0$, we find that $QI$ is of order $-3$ and thus
$\dd^-(QI)=\bigohneg{4} \dd t$.
Applying \lref{lem:remplaceravg} with $f=I$, we find
\begin{equ}\label{eq:Idt first eq}
	I\dt  =  - \dd^0\left(QI\right)+   \bigohneg{4} \dd t+ \dd \bigohneg{3}~.
\end{equ}
Using that $\avg{QI}=0$, the definition \eref{e:ddd} of $\dd^0$ leads,
upon inspection, to
\begin{equ}
	\dd^0\left(QI\right)= \sum_b w_{b}\partial_{p_b}(QI)\dd t+  \mathcal E\dd t
	+\sum_b   \bigohneg{3}\dd B^b_t~,
\end{equ}
where $ \mathcal E$ is a sum of terms of order $-3$ and $\avg{\mathcal
  E}=0$. 
Applying  \lref{lem:remplaceravg} to $
w_{b}\partial_{p_b}(QI)\dd t$
and $\mathcal E \dt$, we obtain
\begin{equ}\label{eq:d0QI}
	\dd^0\left(QI\right)= \sum_b \avg {w_{b}\partial_{p_b}(QI)}\dd t+
	\bigohneg{4}\dd t +\sum_b   \bigohneg{3}\dd B^b_t~.
\end{equ}
Using the definition of $w_b$, integrating by parts once and using
\eref{eq:redefQf}, we have for $b=1,3$,
\begin{equ}
	\langle w_{b}\partial_{p_b}(QI)\rangle   =   \langle
	\partial_{q_2}(W_{b})Q(\partial_{p_b}I)\rangle =-\langle
	W_{b}\partial_{q_2}Q(\partial_{p_b}I)\rangle =  -\frac 1 {p_2}\langle W_{b}
	\partial_{p_b}I\rangle~.
\end{equ}
Since $\partial_{p_b}I = -p_2^{-2}\left(2{p_b}w_b
+\gamma_b{W_b}\right)$, we get
\begin{equ}\label{e:almostthereI}
	\langle w_{b}\partial_{p_b}(QI)\rangle =\avg{\frac 1 {p_2^3}W_b \left(2{p_b}w_b
		+\gamma_b{W_b}\right)} =  \frac 1 {p_2^3} \gamma_b \avg{W_b^2}~,
\end{equ}
where again we have used that $\avg{W_bw_b} = \frac 1 2
\avg{\partial_{q_2}W_b^2} = 0$. 
Substituting \eref{e:almostthereI} into \eref{eq:d0QI} and then the
result into \eref{eq:Idt first eq} we finally get
\begin{equ}\label{eq:Idtfinal}
	I\dt = -\frac \alpha  {p_2^3}\dd t+\sum_b   \bigohneg{3}\dd B^b_t+  
	\bigohneg{4} \dd t+ \dd \bigohneg{3}~,
\end{equ}
where
$$\alpha = \sum_b  \gamma_b \avg{W_b^2}~.$$

We next deal with the term $J\dt$ of \eref{eq:3emescorrections}. First, by
\lref{lem:remplaceravg}, 
\begin{equ}\label{eq:JavgJ}
	J\dt = \avg J\! \dt + \bigohneg{4}\dt + \sum_b \bigohneg{4}\dd B^b_t+\dd
\bigohneg{4}~.
\end{equ}
Unfortunately, $\avg J \neq 0$,\footnote{For example if $W_b = -\cos(q_2-q_b)$,
	there are in $\avg J$ some terms of the kind
	$\avg{p_3\cos(q_2-q_1)\sin(q_2-q_3)}$ and 
$\avg{p_1\sin(q_2-q_1)\cos(q_2-q_3)}$
	which are non-zero.} and we will need some more subtle identifications.
Integrating by parts, we have
\begin{equ}\label{eq:Jexpand}
	\begin{aligned}
		\avg J &= \frac{2p_b}{p_2^3}\sum_b \avg{W_b{\phi_2}} =
		-\frac{2p_b}{p_2^3}\sum_b \avg{W_b \partial_{q_2}{\Phi_2}}  \\
		&=  \frac{2p_b}{p_2^3}\sum_b \avg{(\partial_{q_2}W_b) \Phi_2} =
		\frac{2p_b}{p_2^3}\sum_b \avg{w_b{\Phi_2}}\\
		& =- \frac 1 {p_2^{3}} \sum_b p_b \partial_{q_b}\langle \Phi_2^2 \rangle ~.
	\end{aligned}
\end{equ}
On the other hand, since $p_2^{-3}\langle \Phi_2^2 \rangle$ does not depend on
$q_2$, we find $\dd^+(p_2^{-3}\langle \Phi_2^2 \rangle)=0$, so that
\begin{equ}\label{e:almostJ}
	\begin{aligned}
		\dd \left(\frac {\langle \Phi_2^2 \rangle}{p_2^3}\right)&= \dd^0 \left(\frac
		{\langle \Phi_2^2 \rangle}{p_2^3}\right)+\dd^-\left( \frac {\langle \Phi_2^2
			\rangle}{p_2^3}\right)\\
		& =  \sum_b p_b \partial_{q_b} \left(\frac {\langle \Phi_2^2
			\rangle}{p_2^3}\right) \dt +\bigohneg{4}\dt~.
	\end{aligned}
\end{equ}
Combining \eref{eq:Jexpand} and \eref{e:almostJ} we find
$$
 \avg J\!\dt = \bigoh{-4} \dt+ \dd(p_2^{-3}\avg \Phi_2^2) = 
\bigoh{-4}\dt+ \dd \bigoh{-3}~,
$$
 so that from \eref{eq:JavgJ} we obtain 
$$
J\dt =   \bigohneg{4}\dt + \sum_b \bigohneg{4}\dd B^b_t+ \dd \bigohneg{3}~.
$$
This together with \eref{eq:3emescorrections} and \eref{eq:Idtfinal} finally
shows that
\begin{equ}\Label{eq:5emescorrections}
       \dd p_2 = - \left(\frac \alpha  {p_2^3}   + \bigohneg{4}\!\right)
\dt+\sum_b
       \left(\frac{\sqrt{2\gamma_bT_b}{W_b}}{ p_2^2}+\bigohneg{3}\!\right)\dd
B^b_t +
       \dd \left(-\frac{\Phi_2}{p_2}+\bigohneg{2}\!\right) ~,
\end{equ}
which implies \eref{e:p1nm4equiv} and completes the proof of \pref{p:mainprop}.

\begin{remark}\label{rem:langevin} We can argue (in a nonrigorous way) that when $|p_2|$ is very
large, the dynamics of $\tilde p_2$ is approximately that of a particle
interacting with two ``effective'' heat baths at temperatures $T_1$ and $T_3$,
but with some coupling of magnitude $p_2^{-4}$. Indeed, we can write
\eref{e:p1nm4equiv} in the canonical ``Langevin'' form
\begin{equ}
 \dd \tilde  p_2(t) =  \sum_b \big(-\tilde \gamma_b(X_t) \tilde  p_2(t)\dd t+
 \sigma_b(X_t) \dd B^b_t\big)~,
\end{equ}
with $\sigma_b(q, p) = {\sqrt{2\gamma_bT_b}{W_b}}/{p_2^2}+\bigoh{-3}$ as in
\pref{p:mainprop} and $\tilde \gamma_b(q, p) =    { {\gamma_b 
\avg{W_b^2}}}/{p_2^4} + \bigoh{-5}$. We would like to introduce an effective
temperature $\tilde T_b$ by some Einstein-Smoluchowski relation of the kind
$\sigma_b^2/(2\tilde \gamma_b) = \tilde  T_b $ in the limit $|p_2|\to \infty$.
Unfortunately, 
$$
\lim_{|p_2|\to\infty}\frac{\sigma_b^2}{2\tilde \gamma_b} = \frac
{W_b^2}{\avg{W_b^2}}T_b~,
$$
which instead of a constant is an oscillatory quantity (with mean $T_b$). Now
observe that these oscillations disappear if we approximate the oscillatory term
$W_b$ in $\sigma_b$ by its quadratic mean $\avg{W_b^2}\!\vphantom{0}^{1/2}$.
This approximation is reasonable in the following sense: for small $t$ and large
$|p_2|$, we have that $p_2(s) \approx p_2(0)$ for $s\leq t$, so that 
$$
\int_0^t  \frac{\sqrt{2\gamma_b T_b}W_b}{{p_2^2(s)}} \dd B^b_s  \approx 
\frac{\sqrt{2\gamma_b T_b}}{{p_2^2(0)}}
{\avg{W_b^2}\!\vphantom{0}^{1/2}}M(t)\quad \text{with } 
M(t) =\int_0^t \frac{W_b}{\avg{W_b^2}\!\vphantom{0}^{1/2}}\dd B^b_s~.
$$
But then, by the Dambis-Dubins-Schwarz representation theorem, there is another
Brownian motion
$\tilde B^b$ such that $M(t) = \tilde B^b_{\tau(t)}$ with $\tau(t) =  \int_0^t
{W_b^2}/{{\avg{W_b^2}}}\dd s$.
Clearly, when $|p_2|$ is very large, $\tau(t) \approx t$ so that $M(t)$ is very
close to $\tilde B^b_{t}$.
In this sense, when $|p_2|\to \infty$, it is reasonable to approximate
$({\sqrt{2\gamma_b T_b}}W_b/{{p_2^2}}) \dd B^b_s$
with $({\sqrt{2\gamma_b T_b}}\avg{W_b^2}\!\vphantom{0}^{1/2}/p_2^2) \dd \tilde
B^b_s$, so that the Einstein-Smoluchowski
relation indeed holds with effective temperature $\tilde T_b = T_b$.
\end{remark}

\begin{remark} 
The ergodicity of 1D Langevin processes is well understood:
for any $\delta \in (-1,0)$, processes satisfying an SDE of the kind
$$\dd p \sim - C_1p^\gd \dt  + C_2\dd B_t$$
asymptotically (when $|p|\to \infty$)
are typically ergodic with a rate bounded above and below by $\exp(-c_\pm t^{(1+\gd)/(1-\gd)})$ for
some constants $c_+, c_->0$ (see
\cite{douc_subgeometric_2009,hairer_how_2009} and references therein,
in particular \cite{hairer_how_2009} for the lower bound). As argued in \rref{rem:langevin}, the variable $\tilde p_2$ (which is expected
to be the component of the system that limits the convergence rate) essentially
obeys an equation of the kind 
$\dd p  \sim -C_1p^{-3}\dt + C_2 p^{-2}\dd B_t$
asymptotically. It is easy to check that a change of
variable $y = p^{3}$ yields the asymptotic 
dynamics $\dd y \sim - C_1' y^{-1/3}\dt + C_2'\dd B_t$
so that with $\delta = -1/3$, we expect a rate $\exp(-c t^{1/2})$.
This suggests that the rate of convergence we find is optimal.
\end{remark}

\section{Lyapunov function}\label{sec:effectivedyntoLyapunov}
\label{sec:Lyapunovconstr}

We now prove \pref{prop:Lyapunov}. Throughout this section, $\tilde p_2$
is the function defined in \pref{p:mainprop}.
The basic idea is to consider a Lyapunov function
$$
V \sim \rho(p)\p2t2 e^{\frac \beta 2 \p2t2} + e^{\beta H}~,
$$
where $\rho(p)$ is non-zero only when $|p_2|$ is much larger than $|p_1|$ and $|p_3|$.
We will obtain that
$L V \lesssim -\phi(V)$, with $\phi(s) \sim  {s}/\log(s)$ as in 
\pref{prop:Lyapunov}. The fact that we do {\em not} have a bound
of the kind $L V \lesssim -cV$ (which would yield exponential ergodicity)
comes from the very slow decay of $p_2$. The basic idea is that, when $p_2\to \infty$ and $p_1, p_3\sim 0$,
$$L\tilde p_2 \sim - p_2^{-3}~,\quad \text{ so that }\quad L\Big(\p2t2 e^{{\frac \beta 2} \p2t2}\Big)
\sim -e^{{\frac \beta 2} \p2t2} \sim - \frac V{p_2^2} \sim -\frac V{\log V}~.$$

We now introduce the necessary tools to make this observation rigorous.

\begin{lemma}\label{l:LebetaH} For $\beta  > 0$ small enough, there are
constants $C_1, C_2>0$ such that
\begin{equ}\Label{eq:LbetaHseul}
Le^{\beta H} \leq  (C_1 - C_2(p_1^2 + p_3^2))e^{\beta H}~.
\end{equ}
\end{lemma}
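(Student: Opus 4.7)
The plan is to compute $Le^{\beta H}$ explicitly and reduce the statement to a quadratic inequality in $(p_1,p_3)$. Since the second-order part of $L$ only involves $\partial_{p_b}^2$ for $b=1,3$, and since $\partial_{p_b}H=p_b$ with $\partial_{p_b}^2 H=1$, applying the chain rule gives the identity
\begin{equ}
Le^{\beta H} \;=\; e^{\beta H}\Bigl(\beta\,LH \,+\, \beta^2 \sum_{b}\gamma_b T_b\,(\partial_{p_b}H)^2\Bigr),
\end{equ}
where I have used $\partial_{p_b}^2 H=1$ to re-express $\mathcal{L}_1 H$ (the first-order part applied to $H$) as $LH-\sum_b\gamma_b T_b$, which allows the known formula \eqref{eq:LH} to be plugged in directly.

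Substituting \eqref{eq:LH} and $(\partial_{p_b}H)^2=p_b^2$ I would obtain
\begin{equ}
Le^{\beta H} \;=\; e^{\beta H}\sum_{b}\Bigl(\beta\tau_b p_b \,+\, \beta\gamma_b T_b \,-\, \beta\gamma_b(1-\beta T_b)\,p_b^2\Bigr).
\end{equ}
Choosing $\beta$ small enough that $1-\beta T_b>\tfrac12$ for $b=1,3$ makes the quadratic coefficient strictly negative. The linear term $\beta\tau_b p_b$ is then controlled by Young's inequality, absorbing half of the negative quadratic into a bounded constant. Collecting constants yields $Le^{\beta H}\le(C_1-C_2(p_1^2+p_3^2))e^{\beta H}$ with $C_2=\tfrac12\min_b \beta\gamma_b(1-\beta T_b)>0$ and $C_1$ a finite combination of $\beta,\gamma_b,T_b,\tau_b$.

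No substantive obstacle is expected: the argument is almost entirely algebraic. The only genuine smallness requirement on $\beta$ is $\beta<\min_b T_b^{-1}$, which is the same threshold as that ensuring integrability of $e^{\beta H}$ against the equilibrium Gibbs measure in the $p_b$ directions. The result will be used in the next section, together with the averaged dynamics of $\tilde p_2$ provided by \pref{p:mainprop}, to build the full Lyapunov function $V$ announced in \pref{prop:Lyapunov}.
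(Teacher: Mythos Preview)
Your proof is correct and follows essentially the same route as the paper: compute $Le^{\beta H}=e^{\beta H}\sum_b\bigl(-\gamma_b\beta(1-\beta T_b)p_b^2+\beta\tau_b p_b+\gamma_b\beta T_b\bigr)$, require $\beta<1/\max(T_1,T_3)$ so the quadratic coefficient is negative, and absorb the linear term via Young's inequality. The only cosmetic difference is that you phrase the smallness condition as $1-\beta T_b>\tfrac12$, which is slightly stronger than needed but of course sufficient.
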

\begin{proof} We have $
Le^{\beta H} = \sum_b  \left(-\gamma_b\beta (1-\beta T_b) p_b^{2}+ \beta\tau_b p_b +
\gamma_b\beta T_b\right)e^{\beta H}
$. If $\beta < 1/\max(T_1, T_3)$,
then $\gamma_b\beta (1-\beta T_b) > 0$. Moreover, since $\beta \tau_b p_b < \frac 12
{\gamma_b\beta (1-\beta T_b)}  p_b^2 + C$ for $C$ large enough,
we find the desired bound.
\end{proof}

\begin{lemma}\label{l:Lelambdap2tilde}
For $\beta  > 0$ small enough, there is a constant $C_3 > 0$ such that on
$\Omega^\dagger=\{x\in \Omega:\, p_2\neq 0\}$,
\begin{equ}[eq:Lfp2tilde]
L \big(\p2t2 e^{\frac \beta 2  \p2t2}\big) \leq  \big(-C_3  +
\bigohneg{1}\!\big)e^{\frac \beta 2  \p2t2}~.
\end{equ}
\end{lemma}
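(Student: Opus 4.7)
My plan is to apply the It\=o formula to $f(\tilde p_2)$ with $f(y)=y^2e^{\beta y^2/2}$, using the SDE for $\tilde p_2$ provided by \pref{p:mainprop}, and to check that for small enough $\beta$ the leading coefficient of $e^{\beta\tilde p_2^2/2}$ is strictly negative. A direct differentiation gives
\[
f'(y)=(2y+\beta y^3)e^{\beta y^2/2},\qquad f''(y)=(2+5\beta y^2+\beta^2 y^4)e^{\beta y^2/2},
\]
so combining It\=o with \eref{e:p1nm4equiv} yields
\[
L\bigl(f(\tilde p_2)\bigr)=\Bigl[(2\tilde p_2+\beta\tilde p_2^3)\,a+\tfrac12(2+5\beta\tilde p_2^2+\beta^2\tilde p_2^4)\sum_b\sigma_b^2\Bigr]e^{\beta\tilde p_2^2/2}.
\]

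Since $\tilde p_2=p_2+\bigohneg{1}$, binomial expansion gives $\tilde p_2^k=p_2^k+\hat{\mathcal O}(p_2^{k-2})$ for every $k\geq 1$. Substituting the asymptotics from \pref{p:mainprop} and expanding term by term, I find that the drift contribution is $(2\tilde p_2+\beta\tilde p_2^3)a=-\beta\alpha+\bigohneg{1}$ with $\alpha=\sum_b\gamma_b\langle W_b^2\rangle$: the cubic piece $\beta\tilde p_2^3\cdot(-\alpha/p_2^3)$ produces the order-one value $-\beta\alpha$, while $2\tilde p_2\cdot a$ is itself $\bigohneg{2}$. Similarly, the diffusion contribution $\tfrac12(2+5\beta\tilde p_2^2+\beta^2\tilde p_2^4)\sum_b\sigma_b^2$ reduces to $\beta^2\sum_b\gamma_bT_bW_b^2+\bigohneg{1}$, since only the quartic piece paired with the leading $2\gamma_bT_bW_b^2/p_2^4$ in $\sigma_b^2$ survives at order one, the other combinations being at least $\bigohneg{2}$. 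Collecting,
\[
L\bigl(f(\tilde p_2)\bigr)=\Bigl[-\beta\alpha+\beta^2\sum_b\gamma_bT_bW_b^2+\bigohneg{1}\Bigr]e^{\beta\tilde p_2^2/2}.
\]

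Under \aref{as:assumptioncoupling}, at least one of the $w_b$ is not identically zero, so the corresponding $W_b$ is non-constant; combined with $\langle W_b\rangle=0$ from \aref{ass:moyennenulle}, this forces $\langle W_b^2\rangle>0$, hence $\alpha>0$. Since the $W_b$ are continuous on the torus, $\sum_b\gamma_bT_bW_b^2$ is uniformly bounded by some constant $M$. Choosing $\beta\in(0,\alpha/(2M))$ (also small enough for \lref{l:LebetaH} to apply) makes the leading bracketed coefficient at most $-\beta\alpha/2$, which proves \eref{eq:Lfp2tilde} with $C_3=\beta\alpha/2$.

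The main conceptual obstacle is a competition between powers of $\beta$: the useful dissipation $-\beta\alpha$ produced by the cubic term in $f'$ paired with the leading drift of $\tilde p_2$ is \emph{linear} in $\beta$, whereas the unavoidable It\=o noise correction $\beta^2\sum_b\gamma_bT_bW_b^2$, arising from differentiating the Gaussian factor twice, is \emph{quadratic} in $\beta$. Only by taking $\beta$ small can we guarantee that the drift wins over the noise. Everything else is routine bookkeeping on the degrees of the error terms, using that the classes $\bigoh{\cdot}$ behave as expected under products and that $\tilde p_2-p_2=\bigohneg{1}$ is small enough to render all cross-terms harmless.
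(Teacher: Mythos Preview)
Your proof is correct and follows essentially the same route as the paper: apply It\=o to $f(\tilde p_2)$ with $f(y)=y^2e^{\beta y^2/2}$, plug in the asymptotics of $a$ and $\sigma_b$ from \pref{p:mainprop}, and observe that the bracket reduces to $-\beta\alpha+\beta^2\sum_b\gamma_bT_bW_b^2+\bigohneg{1}$, which is strictly negative for small $\beta$. Your expansion $\tilde p_2^{\,k}=p_2^k+\hat{\mathcal O}(p_2^{k-2})$ is in fact one order sharper than what the paper records ($\hat{\mathcal O}(p_2^{k-1})$), and your explicit verification that $\alpha>0$ from \aref{as:assumptioncoupling} and \aref{ass:moyennenulle} is a detail the paper leaves implicit; otherwise the arguments coincide.
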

\begin{proof}Introducing $f(s) = s^2e^{\frac \beta 2  s^2}$, we have by the
It\=o formula and \pref{p:mainprop} that
\begin{equs}
\dd \big(\p2t2 e^{\frac \beta 2  \p2t2}\big) & = \dd f(\tilde p_2)
= f'(\tilde p_2) ( a\dt + \sum_b  \sigma_b\dd B^b_t) + \frac 12 f''(\tilde p_2)
\sum_b \sigma_b^2  \dt\\
& = (2 \tilde p_2+\beta  \tilde p_2^{\,3})e^{\frac \beta 2 \p2t2} ( a\dt +
\sum_b  \sigma_b\dd B^b_t) + \frac 12 (2+5\beta  \p2t2 +\beta^2 \tilde
p_2^{\,4})e^{\frac \beta 2  \p2t2} \sum_b \sigma_b^2  \dt~.
\end{equs}
Now since $a =- \alpha{p_2^{-3}} + \bigohneg{4}$ with
$\ga= \sum_b \gamma_b \avg{ W_b^2}$, $\sigma_b
={\sqrt{2\gamma_bT_b}{W_b}}{p_2^{-2}}+\bigohneg{3}$, and $\tilde p_2^{\,k} =
p_2^k + \bigoh{k-1}$ for all $k$, we find after taking the expectation value
\begin{equs}
L \big(\p2t2 e^{\frac \beta 2  \p2t2}\big) &= \big(-\alpha \beta +
 {\beta^2} \sum_b\gamma_b T_b W_b^2  + \bigohneg{1}\!\big)e^{\frac \beta 2  \p2t2}~,
\end{equs}
which gives the desired bound if $\beta$ is small enough (recall that the $W_b^2$ are bounded).
\end{proof}

\noindent{\bf Convention}: We fix $\beta>0$ small enough so that the conclusions of
\lref{l:LebetaH} and \lref{l:Lelambdap2tilde} hold.\medskip 

Let $k\geq 1$ be an integer and $R>0$ be a constant (which we will fix later). We
split $\Omega$ into three disjoint sets $\Omega_1, \Omega_2, \Omega_3$ defined
by
\begin{myitem}
\item $\Omega_1 = \{x\in \Omega : |p_2| < (p_1^2+p_3^2)^k + R\}$,
\item $\Omega_2 = \{x\in \Omega : (p_1^2+p_3^2)^k + R \leq |p_2| \leq 2
(p_1^2+p_3^2)^k +2R\}$,
\item $\Omega_3 = \{x\in \Omega : |p_2| > 2 (p_1^2+p_3^2)^k + 2R\}$.
\end{myitem}

Fix some $m, n\in \mathbb N$ and $\ell \geq 1$. On $\Omega_2\cup \Omega_3$,
we have by definition $|p_2| \geq  (p_1^2+p_3^2)^k + R$, so that
$$
\left|\frac{p_1^{n}p_3^{m}}{p_2^{\ell}}\right| \leq \frac{ |p_1^np_3^m|}
{((p_1^2+p_3^2)^k + R)^{\ell}} \qquad \text{(on $\Omega_2\cup\Omega_3$)}~.
$$
Clearly, if $k$ and $R$ are large enough, the right-hand side is bounded by an
arbitrarily small constant.
Therefore, any given $\bigoh{-1}$ is also bounded by an arbitrarily small constant on $\Omega_2\cup\Omega_3$ 
provided that $k$ and $R$ are large enough, 
since it is
by definition a sum {\em finitely} many terms of order less or equal to -1. Using this, we obtain

\begin{lemma}\label{l:omegas}For $k$ and $R$ large enough, there are
constants $C_4,\dots,  C_7>0$ such that the
following properties hold on $\Omega_2\cup\Omega_3$:
\begin{equ}\label{eq:lomegasfirst}
|\p2t2 - p_2^2| < C_4~,
\end{equ}
\begin{equ}\label{eq:lomegassec}
L \big(\p2t2 e^{\frac \beta 2  \p2t2}\big) \leq  -C_5e^{\frac
\beta 2 \p2t2}~,
\end{equ}
\begin{equ}\label{eq:lomegasthird}
 C_{6} e^{-\frac \beta 2 
\left(p_1^2 +p_3^2\right)}e^{ \beta H}
\leq e^{\frac \beta 2 \p2t2} \leq C_7 e^{-\frac \beta 2 
\left(p_1^2 +p_3^2\right)}e^{ \beta H}~.
\end{equ}
\end{lemma}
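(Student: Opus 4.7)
The plan is to exploit the key observation made just before the lemma's statement: on $\Omega_2 \cup \Omega_3$, where $|p_2| \geq (p_1^2+p_3^2)^k + R$, each term of the form $p_1^n p_3^m p_2^{-\ell} g(q)$ with $\ell \geq 1$, $n,m \geq 0$, and $g$ bounded, satisfies
\[
|p_1^n p_3^m p_2^{-\ell} g(q)| \leq \|g\|_\infty \frac{|p_1|^n |p_3|^m}{((p_1^2+p_3^2)^k + R)^{\ell}},
\]
which can be made uniformly smaller than any prescribed $\varepsilon > 0$ by first choosing $k$ large enough (so the denominator dominates numerator as $p_1^2+p_3^2 \to \infty$) and then $R$ large enough (so the denominator dominates when $p_1^2+p_3^2$ is small). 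Since every $\bigohneg{1}$ in the earlier analysis is a \emph{finite} sum of such terms, a single choice of $k$ and $R$ suffices to make all finitely many $\bigohneg{1}$ expressions appearing below uniformly small on $\Omega_2 \cup \Omega_3$.

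For \eref{eq:lomegasfirst}, I would use \pref{p:mainprop}, which gives $\tilde p_2 = p_2 + \Phi_2/p_2 + \bigohneg{2}$. Squaring,
\[
\tilde p_2^{\,2} - p_2^2 = 2\Phi_2 + \bigohneg{1}.
\]
Since $\Phi_2$ is a smooth $2\pi$-periodic function of $q$, it is bounded by some $M$; taking $k, R$ large enough to make the $\bigohneg{1}$ remainder smaller than $1$ on $\Omega_2 \cup \Omega_3$ yields $C_4 = 2M+1$. For \eref{eq:lomegassec}, I would invoke \lref{l:Lelambdap2tilde} (applicable since $\Omega_2 \cup \Omega_3 \subset \Omega^\dagger$, because $|p_2|\geq R > 0$ there):
\[
L\big(\p2t2 e^{\frac \beta 2  \p2t2}\big) \leq (-C_3 + \bigohneg{1})\,e^{\frac{\beta}{2}\tilde p_2^{\,2}},
\]
and further enlarge $k, R$ if needed so that $|\bigohneg{1}| \leq C_3/2$, giving $C_5 := C_3/2$.

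For \eref{eq:lomegasthird}, I would split
\[
\beta H = \tfrac{\beta}{2}(p_1^2 + p_2^2 + p_3^2) + \beta\Big(\sum_{i} U_i + \sum_{b} W_b\Big),
\]
and note that the potential part is bounded by some $M'$ since the $U_i$ and $W_b$ are continuous on $\bbT$. Therefore $e^{\beta H} e^{-\frac{\beta}{2}(p_1^2+p_3^2)} e^{-\frac{\beta}{2} p_2^2}$ lies in $[e^{-\beta M'}, e^{\beta M'}]$. Combining with \eref{eq:lomegasfirst}, which yields $e^{\frac{\beta}{2}(\tilde p_2^{\,2} - p_2^2)} \in [e^{-\beta C_4/2}, e^{\beta C_4/2}]$, one obtains both inequalities with, e.g., $C_6 = e^{-\beta(M' + C_4/2)}$ and $C_7 = e^{\beta(M' + C_4/2)}$.

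The only genuine difficulty is essentially bookkeeping: ensuring that a single pair $(k,R)$ simultaneously controls the (finitely many) $\bigohneg{1}$ expressions that appear in the three items. Once the principle above is accepted, each bound is a one-line consequence of the results already established in Sections~\ref{sec:p2dynamics} and~\ref{sec:Lyapunovconstr}, so no new analytic input beyond \pref{p:mainprop} and \lref{l:Lelambdap2tilde} is needed.
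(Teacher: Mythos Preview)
Your proposal is correct and follows essentially the same approach as the paper's own proof: both deduce \eref{eq:lomegasfirst} by squaring $\tilde p_2 = p_2 + \Phi_2/p_2 + \bigohneg{2}$, obtain \eref{eq:lomegassec} by making the $\bigohneg{1}$ remainder in \lref{l:Lelambdap2tilde} small on $\Omega_2\cup\Omega_3$, and derive \eref{eq:lomegasthird} by writing $e^{\frac\beta2\p2t2}$ as $e^{\beta H}$ times a factor controlled by the bounded potentials and \eref{eq:lomegasfirst}. Your write-up is in fact slightly more explicit than the paper's (you name constants and note $\Omega_2\cup\Omega_3\subset\Omega^\dagger$), but the argument is the same.
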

\begin{proof}
Since
$\tilde p_2 = p_2 +  {\Phi_2(q)}/{p_2} + \bigoh{-2}$, we have $\p2t2 =
p_2^2 + 2 {\Phi_2(q)} + \bigoh{-1}$. By taking $k$ large enough, 
the $\bigoh{-1}$ here is bounded by a constant
on the set $\Omega_2\cup \Omega_3$, which implies \eref{eq:lomegasfirst}.
Moreover, for large $k$ and $R$, the $\bigohneg{1}$ in \eref{eq:Lfp2tilde}
is also bounded on $\Omega_2\cup \Omega_3$ by an arbitrarily small constant,
 which implies \eref{eq:lomegassec}.
To prove \eref{eq:lomegasthird}, observe that
$$e^{\frac \beta 2 \p2t2} = e^{\frac \beta 2 
\left(\p2t2- p_2^2 - p_1^2 -p_3^2 - U(q)\right)}e^{ \beta H}~,$$
where $U(q)$ contains all the potentials appearing in $H$. This together with
the boundedness of $U$
and \eref{eq:lomegasfirst}, implies \eref{eq:lomegasthird}.
\end{proof}

\noindent{\bf Convention}: We fix $k$ and $R$ such that the conclusions of \lref{l:omegas} hold.

\begin{definition}
Let $\chi: \mathbb R \to [0,1]$ be a smooth function such that $\chi(s) = 0$
when $|s| < 1$ and $\chi(s) = 1$ when $|s|>2$.
We introduce the cutoff function 
$$
\rho(p) = \chi\left(\frac{p_2}{(p_1^2+p_3^2)^k + R}\right)~,
$$
and the Lyapunov function
$$
V = 1 + A \rho(p) \p2t2 e^{\frac \beta 2  \p2t2} + e^{\beta H}~,
$$
with $A>0$ (to be chosen later).
\end{definition}

By construction $\rho(p)$ is smooth, $\rho(p) = 0$ on $\Omega_1$ and $\rho(p)
=1$ on $\Omega_3$, with some transition on $\Omega_2$.
Remember that $\tilde p_2$ is by construction smooth on
$\Omega^\dagger$, \ie when $p_2\neq 0$. In particular, 
since $\Omega_2\cup\Omega_3 \subset \Omega^\dagger$, the function
$\rho(p) \p2t2 e^{\frac \beta 2  \p2t2}$ is smooth on $\Omega$,
and so is $V$. We can now finally give the

\begin{proof}[Proof of \pref{prop:Lyapunov}]
We show here that $V$ satisfies the conditions enumerated
in \pref{prop:Lyapunov} if $A$ is large enough. Let us first 
prove the first statement, which is that there exist $c_1, c_2>0$ such that
\begin{equ}\label{eq:ineqvn2}
	1 + c_1 e^{\beta H}\leq V\leq c_2(1+p_2^2) e^{\beta H}~.
\end{equ}
Clearly the lower bound on $V$ holds.
We now prove the upper bound. Throughout the proof, we
denote by $c$ a generic
positive constant which can be each time different. 
Since $\rho \neq 0$ only on $\Omega_2 \cup
\Omega_3$, 
we have by \eref{eq:lomegasfirst} and \eref{eq:lomegasthird},
\begin{equs}
|A \rho(p)\p2t2 e^{\frac \beta 2  \p2t2}| & \leq c (p_2 +
C_4)^2e^{-\frac \beta 2 
\left(p_1^2 +p_3^2\right)}e^{ \beta H}\\
& \leq c(p_2^2+ 2C_4 p_2 + C_4^2)e^{ \beta H} \leq c(1+p_2^2)e^{ \beta H}~.
\end{equs}
But then $V \leq 1 + c(1+p_2^2) e^{\beta H} \leq c(1+p_2^2)e^{\beta H}$, where
the last inequality holds because $H$ is bounded below, so that
$e^{\beta H}$ is bounded away from zero.

Let us now move to the second statement of \pref{prop:Lyapunov},
which is that for $c_3, c_4$ large enough and a compact set $K$,
\begin{equ}\label{eq:ineqLvnrappel}
	LV \leq c_3 \ind_K -\phi(V) \quad \text{with } \phi(s) = \frac
{c_4\,s}{2+\log(s)}~.
\end{equ}
We first show that
\begin{equ}\label{eq:prooflyapunovaim}
LV \leq c\ind_K- ce^{\beta H}\qquad \text{with}\quad K =
\{x\in \Omega_1\cup \Omega_2 : p_1^2 + p_3^2 \leq M\}~,
\end{equ}
for some large enough $M$. Clearly $K$ is compact,
since $\Omega_1\cup \Omega_2  = \{x\in \Omega : |p_2| \leq 2 (p_1^2+p_3^2)^k +2R\}$.
\begin{myitem}
\item On $\Omega_1$ we simply have $V = 1 + e^{\beta H}$. By
\lref{l:LebetaH}, we have $LV \leq  (C_1 - C_2 (p_1^2 + p_3^2))e^{\beta H}$.
Since $\Omega_1 \setminus K = \{x\in \Omega_1 : p_1^2 + p_3^2 > M\}$, we have for large enough $M$ that $LV \leq -ce^{\beta H}$ on $\Omega_1 \setminus K$, and therefore
\eref{eq:prooflyapunovaim} holds on $\Omega_1$.

\item On $\Omega_2$, the key is to observe that there is a polynomial $z(p_1,
p_2, p_3)$
such that 
$$
|L (A\rho(p) \p2t2 e^{\frac \beta 2  \p2t2})| \leq z(p) e^{\frac
\beta 2  \p2t2} \leq 
C_7 z(p) e^{-\frac \beta 2
\left(p_1^2 + p_3^2\right)}e^{\beta H}~,
$$ 
where the second inequality comes from \eref{eq:lomegasthird}.
Now,  since $p_1^2+p_2^2 \sim |p_2|^{1/k}$ on $\Omega_2$, we have that
$z(p) e^{-\frac \beta 2 
\left(p_1^2 +p_3^2\right)}$ is bounded on $\Omega_2$. 
Therefore, by this and \lref{l:LebetaH}, we have on $\Omega_2$,
\begin{equs}\Label{eq:LVdansOmega2}
LV &\leq  \left( C_7 z(p) e^{\frac \beta 2 
\left(- p_1^2 -p_3^2\right)}+  C_1 - C_2 (p_1^2 + p_3^2)\right)e^{\beta H}\\
& \leq  \left( c - C_2 (p_1^2 + p_3^2)\right)e^{\beta H} ~.
\end{equs}
which, as in the previous case, implies that
\eref{eq:prooflyapunovaim} holds on $\Omega_2$ if $M$ is large enough.

\item On $\Omega_3$, which is the critical region, we have
 $V = 1 + A \p2t2 e^{\frac \beta 2  \p2t2} +e^{\beta H}$.
By \lref{l:LebetaH} and \eref{eq:lomegassec}, it holds in $\Omega_3$ that
\begin{equ}\label{eq:LVdansOmega3}
LV \leq  (C_1 - C_2 (p_1^2 + p_3^2))e^{\beta H} 
 -C_5 A e^{\frac \beta 2 \p2t2} ~.
\end{equ}
On the set $\{x\in \Omega_3 : C_1 - C_2 (p_1^2 + p_3^2) \leq -1\}$,
we simply have $LV \leq  -e^{\beta H}$, so that \eref{eq:prooflyapunovaim} holds
trivially.
On the other hand, on the set 
$\{x\in \Omega_3 : C_1 - C_2 (p_1^2 + p_3^2) > -1\}$ the quantity
$p_1^2+p_3^2$ is bounded, so that 
$e^{\frac \beta 2 \p2t2} \geq c e^{\beta H}$  by \eref{eq:lomegasthird},
which with \eref{eq:LVdansOmega3} implies that
$$
LV \leq  (C_1 - C_2 (p_1^2 + p_3^2))e^{\beta H} 
 -c A e^{\beta H} \leq  (C_1 - cA)e^{\beta H}~.
$$
By making $A$ large enough, we again find a bound  $LV \leq  -ce^{\beta H}$,
so that \eref{eq:prooflyapunovaim} holds.
\end{myitem}

Therefore, \eref{eq:prooflyapunovaim} holds on all of $\Omega$.
To obtain \eref{eq:ineqLvnrappel}, we need only show that
$e^{\beta H} \geq cV/(2+\log V)$. By
the boundedness of the potentials and the definition
of $V$, we have
$1+ p_2^2 \leq 2 H + c \leq c \log(e^{\beta H}) + c  \leq c\log V +c  \leq
c(\log V + 2)$.
But then by \eref{eq:ineqvn2}
we indeed have that $e^{\beta H} \geq  c V/(1+p_2^2) \geq cV/(2+\log V)$. This
completes the proof of \pref{prop:Lyapunov}.
\end{proof}

\begin{remark} 
 The external forces $\tau_b$ and the
pinning potentials $U_i$ (if non-zero) do not play
a central role in the properties of the Lyapunov function.
On the contrary, the interaction potentials $W_b$ are very
important, since we need $\alpha = \sum_b  \gamma_b \avg{W_b^2}$ to be strictly
positive.
\end{remark}

\begin{remark} 
 Although we assume throughout that $T_1$ and $T_3$ are strictly positive, the
computations that lead to the Lyapunov function
apply to zero temperatures as well (the
temperatures only appear in some non-dominant terms in $V$ and $L V$).
In that case, the existence of an invariant measure
can still be obtained by compactness arguments 
(see \eg  Proposition 5.1 of \cite{hairer_slow_2009}).
However, the smoothness, uniqueness and
convergence assertions do not necessarily hold: when $T_1=T_3 = 0$ the system is
deterministic, the
transition probabilities are not smooth, and there is at least one invariant
measure concentrated at each stationary point of the system.
The positive temperatures assumption is
crucial in the next section.
\end{remark}

\section{Smoothness and irreducibility}
\label{sec:controlirred}

This section is devoted to proving that the hypotheses of \tref{th:douc} other
than the existence of the Lyapunov function are satisfied. More precisely we
will
prove the following proposition.

\begin{proposition}\label{prop:controlstart}
The following properties hold.
\begin{myenum}
	\item[(i)] The transition probabilities $P^t(x, \argcdot)$ have a density
	$p_t(x, y)$ that is smooth in $(t, x, y)$ when $t>0$. In particular, the
process
	is strong Feller.
	\item[(ii)] The time-1 skeleton $(X_{n})_{n=0, 1, 2, \cdots}$ is irreducible,
	and the Lebesgue measure $m$ on $(\Omega, \mathcal B)$ is a maximal irreducibility measure.
	\item[(iii)] Every compact set is petite.
\end{myenum}
\end{proposition}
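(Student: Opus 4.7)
My plan is to handle the three parts in the order stated, using (i) as the technical foundation for both (ii) and (iii).

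For (i), I would check the parabolic H\"ormander bracket condition for the Kolmogorov operator $\partial_t - L$ and invoke the standard hypoellipticity theorem to obtain a smooth transition density. Writing $L = Y_0 + \sum_b \gamma_b T_b\,\partial_{p_b}^2$ with the diffusion fields $\partial_{p_1}, \partial_{p_3}$ and the first-order drift $Y_0$ read off from \eref{eq:sde}, a direct calculation gives $[Y_0, \partial_{p_b}] = -\partial_{q_b} + \gamma_b\partial_{p_b}$, so that $\partial_{q_b}$ joins the Lie algebra for $b=1,3$. A second commutator yields $[Y_0, \partial_{q_1}] \equiv -w_1'(q_2-q_1)\,\partial_{p_2}$ modulo directions already available, and iterating with $Y_0$ produces vector fields of the form $w_1^{(k)}(q_2-q_1)\partial_{p_2}$ (plus lower-order corrections) for every $k\ge 1$, and likewise for $b=3$. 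Then \aref{as:assumptioncoupling} guarantees that at every point some such bracket is a non-zero multiple of $\partial_{p_2}$; one more commutator with $Y_0$ produces $\partial_{q_2}$, and the algebra spans $T_x\Omega$ at every $x$. Smoothness in $(t,x,y)$ and the strong Feller property follow at once.

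For (ii), I would appeal to the Stroock-Varadhan support theorem to reduce the problem to controllability of the associated deterministic control system, in which $\sqrt{2\gamma_b T_b}\,\dd B^b_t$ is replaced by a smooth control $u_b(t)\,\dt$. The strategy is to use $u_1,u_3$ to prescribe essentially arbitrary paths for $p_1,p_3$, hence for $q_1,q_3$ after integration; the coupling forces $w_b(q_2-q_b)$ then supply a forcing on $p_2$ which, by the non-degeneracy guaranteed by \aref{as:assumptioncoupling}, can be shaped to steer $(q_2,p_2)$ onto any prescribed target. This gives approximate controllability in any positive time. Combined with (i), the smooth density $p_1(x,\argcdot)$ is therefore strictly positive everywhere, so that $P^1(x,A)>0$ for every $x$ and every Borel set $A$ with positive Lebesgue measure. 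This is precisely Lebesgue-irreducibility of the time-$1$ skeleton, and the maximality of $m$ follows from the smoothness of $p_1$.

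For (iii), once (i) and (ii) are established, I would invoke the classical result (see \cite{meyn_markov_2009}) that a strong Feller, $\psi$-irreducible Markov process is a T-process and that compact sets are petite for T-processes. Equivalently, one can argue directly: because $p_1(x,y)$ is jointly smooth and strictly positive, for every compact $K\subset\Omega$ and every ball $B\subset\Omega$ one has $\inf_{x\in K}P^1(x,B)\geq c\,m(B)$ for some $c>0$, which exhibits the required minorisation and hence petiteness of $K$.

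The main obstacle is the H\"ormander computation in (i): one has to check that \aref{as:assumptioncoupling} is exactly what is needed to propagate the algebra from the directly-actuated boundary coordinates into the $p_2,q_2$ directions through iterated brackets producing $w_b^{(k)}\partial_{p_2}$. The controllability step in (ii) is conceptually standard but technically delicate, since the middle rotor is only indirectly actuated; fortunately a small-time controllability argument relying on the same bracket structure as in (i) goes through.
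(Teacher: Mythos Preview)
Your treatment of (i) is correct and matches the paper's H\"ormander computation almost verbatim.

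The gap is in (ii). You assert ``approximate controllability in any positive time'' and conclude that the density $p_1(x,\argcdot)$ is strictly positive everywhere. This is false for the rotor model. The equation for the middle rotor reads $\dot p_2 = -\sum_b w_b(q_2-q_b) - u_2(q_2)$, and since the $w_b$ and $u_2$ are smooth functions on $\bbT$, the right-hand side is \emph{uniformly bounded}. Hence $|p_2(t)-p_2(0)|\le Ct$ deterministically, regardless of the controls, and the support of $P^1(x,\argcdot)$ is confined to a slab of bounded width in the $p_2$-direction. In particular $p_1(x,y)$ vanishes for $|p_2^y-p_2^x|$ large, and your minorisation $\inf_{x\in K}P^1(x,B)\ge c\,m(B)$ in the ``direct'' part of (iii) fails as well. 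The bracket condition gives only local accessibility; it cannot overcome this global obstruction, so the ``small-time controllability argument relying on the same bracket structure'' you invoke at the end does not go through.

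The paper deals with this in two ways. The control-theoretic route (its ``direct version'') proves controllability only in time $T^*\le c_1+c_2|p_2^f-p_2^i|$, which is enough for irreducibility of the time-$1$ skeleton because one may take as many steps as needed. Alternatively, the paper gives a soft argument: by a Girsanov shift/scaling, $P^t(x,\argcdot)$ is equivalent to the equilibrium transition probability $\tilde P^t(x,\argcdot)$; at equilibrium the Gibbs measure $\nu\propto e^{-H/T}$ is invariant with everywhere positive density, and a short connectedness argument using the strong Feller property shows that no nontrivial closed invariant set exists, yielding irreducibility. Either fix repairs your proof; once irreducibility is established correctly, your first argument for (iii) via the strong Feller/T-process criterion is valid and is exactly what the paper does.
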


In a sense, (i) shows that we have some effective diffusion in all directions at
very short times, and (ii) shows that every part of the phase space is
eventually reached with positive probability. Observe that (iii) follows from
(i) and (ii). Indeed, by (i), (ii) and Proposition 6.2.8 of
\cite{meyn_markov_2009}, every compact set is petite for the time-1 skeleton.
But then every compact set is also petite with respect to the process $X_t$
(simply by choosing a sampling measure on $[0, \infty)$ that is
concentrated on $\mathbb N$). Therefore,
we need only prove (i) and (ii), which we do in the next two subsections.

\subsection{Smoothness}
\label{sec:smooth}

We show here that the semigroup has a smoothing effect. More
specifically, we show that a H\"ormander bracket condition is satisfied, so that
the transition probability $P^t(x, \dd y)$ has a density $p_t(x, y)$ that is
smooth in $t,x$ and $y$, and every invariant measure has a smooth density
\cite{hormander_1967}.

We identify vector fields over $\Omega$ and the corresponding
first-order differential operators in the usual way (we identify the tangent
space of $\Omega$ with $\mathbb R^6$). This enables us to
consider Lie algebras of vector fields over $\Omega$ of the kind $\sum_i (f_i(q,
p)\partial_{q_i} + g_i(q,p)\partial_{p_i})$, where the Lie bracket
$[\argcdot,\argcdot]$ is the usual commutator of two operators.

\begin{definition}We define $\mathcal M$ as the smallest
Lie algebra that
\begin{myitem}
	\item[(i)] contains the constant vector fields $\partial_{p_1},
\partial_{p_3}$,
	\item[(ii)] is closed under the operation $[\argcdot, A_0]$, where 
	$$A_0 = \sum_{i=1}^3 \left(p_i \partial_{q_i} - u_i\partial_{p_i} \right) +
	\sum_{b}(w_b(\partial_{p_b}-\partial_{p_2}) + \tau_b \partial_{p_b} -
	\gamma_b p_b \partial_{p_b}) 
	$$
	is the drift part of $L$.
\end{myitem}
\end{definition}

By the definition
of a Lie algebra, $\mathcal M$ is closed under
linear combinations and Lie brackets.

\begin{lemma} H\"ormander's bracket condition is satisfied. More precisely, for
all $x = ( q,  p)$, the set $\{v(x): v\in \mathcal M \}$ spans
$\mathbb R^6$.
\end{lemma}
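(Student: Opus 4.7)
The plan is to exhibit, at each $x_0\in\Omega$, vector fields in $\mathcal M$ whose values at $x_0$ span all of $\bbR^6$. Four of the six coordinate directions come for free: $\partial_{p_1},\partial_{p_3}\in\mathcal M$ by definition, and bracketing gives $[\partial_{p_b},A_0]=\partial_{q_b}-\gamma_b\partial_{p_b}$, so $\partial_{q_1},\partial_{q_3}\in\mathcal M$ as well. It remains to produce, at the fixed point $x_0$, the two missing directions $\partial_{p_2}(x_0)$ and $\partial_{q_2}(x_0)$; Assumption~\ref{as:assumptioncoupling} is what will make this possible.

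Fix $x_0=(q^0,p^0)$ and assume, without loss of generality, that the assumption is satisfied for $b=1$ (the case $b=3$ is symmetric, using $\partial_{q_3}$ instead). The only $q_1$-dependent pieces of $A_0$ are $-u_1(q_1)\partial_{p_1}$ and $w_1(q_2-q_1)(\partial_{p_1}-\partial_{p_2})$, so iterating $\mathrm{ad}_{\partial_{q_1}}$ against $A_0$ yields
\begin{equ}
Y_k:=\mathrm{ad}_{\partial_{q_1}}^{k} A_0 = f_k(q_1)\,\partial_{p_1}+g_k(q_2-q_1)(\partial_{p_1}-\partial_{p_2})\in\mathcal M,\qquad k\geq 1,
\end{equ}
where $f_k$ and $g_k$ are, up to sign, the $k$-th derivatives of $u_1$ and $w_1$. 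Since $\partial_{p_1}\in\mathcal M$, evaluating $Y_k$ at $x_0$ and subtracting off the $f_k(q_1^0)\partial_{p_1}(x_0)$ contribution isolates the vector $g_k(q_2^0-q_1^0)(\partial_{p_1}(x_0)-\partial_{p_2}(x_0))$ in the span; Assumption~\ref{as:assumptioncoupling} then supplies a $k$ for which $g_k(q_2^0-q_1^0)\neq 0$, so $\partial_{p_2}(x_0)$ lies in the span.

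For this same $k$, the next step is to compute $[Y_k,A_0]\in\mathcal M$. The decisive observation is that $(\partial_{p_1}-\partial_{p_2})(p_2)=-1$, so the bracket between the $g_k(q_2-q_1)(\partial_{p_1}-\partial_{p_2})$ piece of $Y_k$ and the drift term $p_2\,\partial_{q_2}$ of $A_0$ produces a $-g_k(q_2-q_1)\,\partial_{q_2}$ contribution. This is the only source of $\partial_{q_2}$ in the entire expression, because $p_2\,\partial_{q_2}$ is the unique term of $A_0$ carrying $\partial_{q_2}$. Every remaining commutator in $[Y_k,A_0]$ evaluated at $x_0$ falls into the already-controlled subspace $\mathrm{span}\{\partial_{p_1}(x_0),\partial_{q_1}(x_0),\partial_{p_2}(x_0)\}$, so subtracting these leaves a non-zero multiple of $\partial_{q_2}(x_0)$ in the span.

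The main obstacle is purely organisational: one must verify that every auxiliary commutator arising in $[Y_k,A_0]$ really does stay inside the directions $\{\partial_{p_1},\partial_{q_1},\partial_{p_2},\partial_{p_1}-\partial_{p_2}\}$ when evaluated at $x_0$, and track how the non-vanishing furnished by Assumption~\ref{as:assumptioncoupling} propagates through the two successive levels of iteration. Once this bookkeeping is complete, all six basis vectors of $\bbR^6$ lie in the span of $\{v(x_0):v\in\mathcal M\}$; as $x_0$ was arbitrary, H\"ormander's bracket condition holds throughout $\Omega$.
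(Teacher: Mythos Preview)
Your proof is correct and follows essentially the same route as the paper's: obtain $\partial_{q_b}$ from $[\partial_{p_b},A_0]$, then iterate $\mathrm{ad}_{\partial_{q_b}}$ on $A_0$ to produce a field with $\partial_{p_2}$-component $\pm w_b^{(k)}(q_2-q_b)$ (non-zero for some $k$ by Assumption~\ref{as:assumptioncoupling}), and finally bracket once more with $A_0$ so that the $p_2\partial_{q_2}$ term of the drift yields the missing $\partial_{q_2}$-direction. The only cosmetic difference is that the paper groups the iterated bracket as $w_b^{(n)}(q_2-q_b)\partial_{p_2}+f(q)\partial_{p_b}$ rather than your $f_k(q_1)\partial_{p_1}+g_k(q_2-q_1)(\partial_{p_1}-\partial_{p_2})$, but the content is identical.
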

\begin{proof}
By definition, the constant vector fields $\partial_{p_1}$ and $\partial_{p_3}$
belong to $\mathcal M$. Moreover, for $b=1,3$,
$[\partial_{p_b}, A_0] = \partial_{q_b} - \gamma_b \partial_{p_b}$.
Since $\mathcal M$ is closed under linear combinations and 
$\partial_{p_b}\in \mathcal M$, it follows that $\partial_{q_b} \in \mathcal M$
for $b=1,3$. Thus it only remains to show that at each $x\in \Omega$, we can
span the directions of $\partial_{q_2}$ and $\partial_{p_2}$. In the following, 
$f$ denotes a generic function on $\Omega$ that can be each
time different. We have $
[\partial_{q_b}, A_0] = w'_b(q_2-q_b)\partial_{p_2} +f( q) \partial_{p_b}$
so that commuting $n-1$ times with $\partial_{q_b}$ we get that for all $n\geq
1$
\begin{equ}\label{eq:wndansM}
w^{(n)}_b(q_2-q_b)\partial_{p_2}+f( q) \partial_{p_b} \in \mathcal M~.
\end{equ}
Commuting the above  with $A_0$, we find that for  all $n\geq 1$,
\begin{equ}\label{eq:wndqdansM}
w^{(n)}_b(q_2-q_b)\partial_{q_2}+f( q,  p)\partial_{p_2}+f( q) \partial_{p_b}+f(
q, p) \partial_{q_b} \in \mathcal M~.
\end{equ}
By \aref{as:assumptioncoupling}, there is some $b\in \{1,3\}$ such that 
for any fixed $x\in \Omega$, there is an integer $n\geq 1$ such that
$w^{(n)}_b(q_2-q_b)\neq 0$.
Thus, by \eref{eq:wndansM} and \eref{eq:wndqdansM} the proof is complete.
\end{proof}

Thus, we have proved \pref{prop:controlstart} (i).

\subsection{Irreducibility}

We show in this section that the process has an irreducible skeleton. We
give in fact two different proofs. The first one
is given in a general and abstract framework, and works for chains of any
lengths. The second one is more explicit, gives more than the irreducibility 
of a skeleton, but relies strongly on the fact that
the chain is made of only three rotors.

\subsubsection{Abstract version}

Consider the transition probabilities $\tilde  P^t(\argcdot, \argcdot)$ of the system
at equilibrium, \ie with parameters $\tau_1 = \tau_3 = 0$ and $T_1 = T_3 =T$ for some $T>0$. For all $x$ and $t$, the measures $P^t(x,
\argcdot)$ and $\tilde P^t(x, \argcdot)$ are equivalent. This equivalence holds
because any change of the parameters $\tau_1, \tau_3$ (respectively $T_1,T_3$)
can be absorbed by shifting (respectively scaling) the Brownian motions
appropriately. Therefore, it is enough to prove the irreducibility claim
at equilibrium.

At equilibrium, the Gibbs measure $\nu$ with density $\frac 1Z\exp(- H/T)$ is invariant
(with some normalisation constant $Z$) as mentioned earlier. Note that we do not
assume {\it{a priori}} that $\nu$ is the unique invariant measure at
equilibrium, nor
that the system at equilibrium is irreducible. The only two properties that we
need are invariance and  (everywhere) positiveness of the density of $\nu$ .

\begin{lemma}\label{lem:detailedbalance}The equilibrium transition probabilities
satisfy the following property: for every measurable set
$S$ one has for all $t$
$$
\int_S \tilde{P}^t(x,S^c)\dd \nu = \int_{S^c} \tilde{P}^t(x, S)\dd \nu ~. 
$$
\end{lemma}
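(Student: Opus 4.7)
The plan is to recognise that this is actually an immediate consequence of the invariance of $\nu$ under the equilibrium semigroup $\tilde P^t$, not of a genuine reversibility property; the name ``detailed balance'' is a slight misnomer, since what the statement really expresses is the balance of probability flux in and out of $S$ at stationarity.

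First I would recall that $\nu$ is invariant, so that for every measurable set $S$ and every $t\ge 0$,
\begin{equation*}
\int_\Omega \tilde P^t(x,S)\,\dd\nu(x) = \nu(S).
\end{equation*}
Splitting the domain of integration into $S$ and $S^c$ gives the first identity
\begin{equation*}
\int_S \tilde P^t(x,S)\,\dd\nu(x) + \int_{S^c} \tilde P^t(x,S)\,\dd\nu(x) = \nu(S).
\end{equation*}
Second, since $\tilde P^t(x,\argcdot)$ is a probability measure on $\Omega$ for every $x$, one has $\tilde P^t(x,S) + \tilde P^t(x,S^c) = 1$, whence integrating over $S$ against $\nu$ yields the second identity
\begin{equation*}
\int_S \tilde P^t(x,S)\,\dd\nu(x) + \int_S \tilde P^t(x,S^c)\,\dd\nu(x) = \nu(S).
\end{equation*}
Subtracting the two displays removes the common term $\int_S \tilde P^t(x,S)\,\dd\nu$ and gives exactly the desired equality.

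There is essentially no obstacle here: the only fact used is that $\nu$ is an invariant probability measure for $\tilde P^t$, which is granted in the paragraph preceding the lemma (the Gibbs density $e^{-H/T}/Z$ solves the stationary Fokker--Planck equation at equilibrium; this was also noted in the remark following Theorem~\ref{thm:mainthm}). In the subsequent use of the lemma I expect the real work to come from combining this flux balance with the strong Feller property and the everywhere positivity of the density of $\nu$ in order to upgrade it to irreducibility of the skeleton, but the lemma itself is a one-line stationarity computation.
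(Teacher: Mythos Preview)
Your proof is correct and is essentially the same as the paper's: both use only that $\tilde P^t(x,\argcdot)$ is a probability measure and that $\nu$ is invariant, combining $\tilde P^t(x,S^c)=1-\tilde P^t(x,S)$ with $\int_\Omega \tilde P^t(x,S)\,\dd\nu=\nu(S)$. The only difference is cosmetic (you write two identities and subtract, the paper does it in a single chain of equalities).
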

\begin{proof}
We have by the invariance of $\nu$,
\begin{equ}
	\begin{aligned}
	\int_{S^c} &\tilde{P}^t(x, S)\dd \nu -   \int_{S} \tilde{P}^t(x, S^c)\dd \nu =
	\int_{S^c} \tilde{P}^t(x, S)\dd \nu +   \int_{S} (\tilde{P}^t(x, S) -1)\dd\nu\\
	&  = \int_\Omega \tilde{P}^t(x, S)\dd \nu - \int_S 1 \dd \nu  = \nu(S)-\nu(S)=
	0~,
	\end{aligned}
\end{equ}
which completes the proof.
\end{proof}

\begin{lemma}\label{lem:closedallornothing}Let $A$ be a closed set. If
$A$ is invariant under $\tilde{P}^1$ (\ie $\tilde{P}^1(x, A) = 1$ for all $x\in
A$),
then either $A = \emptyset$ or $A=\Omega$.
\end{lemma}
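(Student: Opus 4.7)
The plan is to combine the detailed-balance identity of \lref{lem:detailedbalance} with the strong Feller property (part (i) of \pref{prop:controlstart}) to show that $\mathbf{1}_A$ is a continuous function, and then invoke connectedness of $\Omega = \mathbb{T}^3 \times \mathbb{R}^3$.

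First I would apply \lref{lem:detailedbalance} with $S = A$ and $t=1$. Since $A$ is $\tilde P^1$-invariant, $\tilde P^1(x, A^c) = 0$ for every $x \in A$, so the left-hand side of the identity vanishes. Hence
\begin{equ}
\int_{A^c} \tilde P^1(x, A)\,\dd \nu \;=\; 0~.
\end{equ}
Since $\nu$ has an everywhere positive density with respect to Lebesgue measure, the nonnegative integrand $\tilde P^1(x, A)$ must vanish for Lebesgue-almost every $x\in A^c$.

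Next I would upgrade this ``almost everywhere'' to ``everywhere on $A^c$''. By the strong Feller property established in \pref{prop:controlstart}(i), the function $x\mapsto \tilde P^1(x,A)$ is continuous, so its zero set is closed. It contains a Lebesgue-full subset of the open set $A^c$, which is therefore dense in $A^c$, and by closedness we conclude that $\tilde P^1(x,A)=0$ for every $x\in A^c$. Combining with the invariance hypothesis $\tilde P^1(x,A)=1$ for $x\in A$, we obtain $\tilde P^1(\argcdot,A) = \mathbf{1}_A$ on the whole of $\Omega$.

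Finally, since $\mathbf{1}_A$ is continuous, $A$ is simultaneously open and closed. As $\Omega$ is connected, $A$ must equal either $\emptyset$ or $\Omega$. The only subtle point is the density/continuity upgrade in the second step — one has to invoke both the positivity of the density of $\nu$ (so that a $\nu$-null subset of the open set $A^c$ is also Lebesgue-null and hence has empty interior in $A^c$) and the strong Feller regularity of $\tilde P^1$; everything else is a straightforward bookkeeping of the identity from \lref{lem:detailedbalance}.
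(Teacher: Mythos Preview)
Your proof is correct and follows essentially the same approach as the paper: use the detailed-balance identity together with positivity of $\nu$ and continuity of $x\mapsto \tilde P^1(x,A)$ to deduce $\tilde P^1(\argcdot,A)=\ind_A$, and then conclude by connectedness of $\Omega$. Your exposition of the ``almost-everywhere to everywhere'' upgrade is slightly more detailed than the paper's, but the logic is identical.
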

\begin{proof}By \lref{lem:detailedbalance}, $
\int_{ A^c} \tilde{P}^1(x,  A)\dd \nu  = \int_{ A} \tilde{P}^1(x, A^c)\dd \nu =
0$ since  $\tilde{P}^1(x, A^c) = 0$ for all $x\in A$. This implies that
$\tilde{P}^1(x,  A) = 0$ for all $x\in A^c$, since $x\mapsto
\tilde{P}^1(x,  A)$ is continuous on the open set $ A^c$ and $\nu$ has an
everywhere positive density. 
But then $\tilde{P}^t(x,  A)$  is $1$ when $x\in  A$ and $0$ when $x\in 
A^c$, so that by continuity we have $\partial A = \emptyset$. Since $\Omega$ is
connected, the conclusion follows.
\end{proof}

Note that same does not hold for non-closed sets: for example $\Omega$ minus any
set of zero Lebesgue measure is still an invariant set.

\begin{lemma} The time-1 skeleton $(X_{n})_{n=0, 1, 2, \cdots}$ is irreducible,
and the Lebesgue measure $m$ is a maximal irreducibility measure.
\end{lemma}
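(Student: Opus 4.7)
The plan is to apply \lref{lem:closedallornothing} to the closure of the set of points reachable from a given $x$, and then to upgrade topological irreducibility to Lebesgue-irreducibility via the strict positivity of the Gibbs density. Since changing $\tau_b$ or rescaling the noise gives the equivalence of $P^t(x,\argcdot)$ and $\tilde P^t(x,\argcdot)$ noted just before \lref{lem:detailedbalance}, it suffices to work at equilibrium. Denote by $\tilde p_n(x,y)$ the smooth density of $\tilde P^n(x,\argcdot)$ afforded by \pref{prop:controlstart}(i), and for fixed $x\in\Omega$ set
\[
B_x \;=\; \{y\in\Omega : \tilde p_n(x,y)>0 \text{ for some } n\geq 1\}, \qquad A_x \;=\; \overline{B_x}.
\]
Then $B_x$ is open by continuity of $\tilde p_n$ and $A_x$ is closed; since $\tilde p_1(x,\argcdot)$ is a probability density, $B_x$ is non-empty.

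The crux will be to check that $A_x$ is $\tilde P^1$-invariant. For $y\in B_x$, pick $n$ and a neighbourhood $W$ of $y$ on which $\tilde p_n(x,\argcdot)>0$. If some $z\in B_x^c$ satisfied $\tilde p_1(y,z)>0$, then continuity of $\tilde p_1(\argcdot,z)$ combined with Chapman--Kolmogorov
\[
\tilde p_{n+1}(x,z) \;=\; \int \tilde p_n(x,y')\,\tilde p_1(y',z)\,\dd y'
\]
would force $\tilde p_{n+1}(x,z)>0$, contradicting $z\notin B_x$. Hence $\tilde p_1(y,\argcdot)$ vanishes on $B_x^c$, so $\tilde P^1(y,A_x)=1$ for every $y\in B_x$. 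The strong Feller property provided by \pref{prop:controlstart}(i) makes $\tilde P^1(\argcdot,A_x)$ continuous, so the identity extends to all of $A_x$. \lref{lem:closedallornothing} then yields $A_x=\Omega$, whence $B_x$ is dense and, by the same continuity argument, $\tilde P^n(x,U)>0$ for every non-empty open $U$ and some $n$.

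To promote topological irreducibility to $m$-irreducibility, take any $B\in\mathcal B$ with $m(B)>0$. The Gibbs density is everywhere positive, so $\nu(B)>0$, and invariance of $\nu$ gives $\int \tilde P^1(y,B)\,\dd\nu(y)=\nu(B)>0$, producing a point $y_0$ with $\tilde P^1(y_0,B)>0$. Strong Feller continuity yields an open neighbourhood $V$ of $y_0$ and $\varepsilon>0$ with $\tilde P^1(\argcdot,B)\geq \varepsilon$ on $V$, and then by topological irreducibility applied to $V$,
\[
\tilde P^{n+1}(x,B) \;\geq\; \int_V \tilde P^n(x,\dd y)\,\tilde P^1(y,B) \;\geq\; \varepsilon\,\tilde P^n(x,V) \;>\; 0
\]
for some $n$. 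Maximality of $m$ is then immediate: for any irreducibility measure $\mu$, a set $B$ with $m(B)=0$ has $\tilde P^n(x,B)=\int_B \tilde p_n(x,y)\,\dd y=0$ for every $n$ and $x$, which is incompatible with $\mu(B)>0$; hence $\mu\ll m$.

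The main obstacle is the invariance step for $A_x$: extracting the pointwise conclusion $\tilde p_1(y,z)=0$ on $B_x^c$ from the integral Chapman--Kolmogorov identity, and then extending invariance from $B_x$ to its closure. Both rely essentially on the joint continuity and strong Feller properties from \pref{prop:controlstart}(i); everything else is routine bookkeeping on top of \lref{lem:closedallornothing} and the strict positivity of the Gibbs density.
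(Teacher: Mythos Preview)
Your argument is correct, and it rests on the same two ingredients as the paper's proof: \lref{lem:closedallornothing} and the strict positivity of the Gibbs density. The organisation, however, is dual to the paper's. You fix a starting point $x$, apply \lref{lem:closedallornothing} to the closure $A_x$ of the forward-reachable set $B_x$, deduce topological irreducibility, and then upgrade to $m$-irreducibility in a second step. The paper instead fixes the target set $B$ with $m(B)>0$ and applies \lref{lem:closedallornothing} directly to $A=\{x:\sum_{n\geq 1}\tilde P^n(x,B)=0\}$; this set is closed by continuity of $x\mapsto\tilde P^n(x,B)$ and invariant by a one-line Chapman--Kolmogorov computation, and the Gibbs measure rules out $A=\Omega$ in the same way you use it. So the paper reaches $m$-irreducibility in a single pass, without the intermediate topological step and without invoking strong Feller to pass from $B_x$ to its closure. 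Your route is slightly longer but perfectly sound, and it has the minor conceptual advantage of isolating topological irreducibility as a standalone fact.
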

\begin{proof} As discussed above, it is enough to
prove the result at equilibrium,  \ie with $\tilde P^1(\argcdot, \argcdot)$.
Let $B$ be a set such that $m(B) > 0$. 
We need to show that the set $A = \{x\in
\Omega : \sum_{n=1}^\infty  \tilde{P}^n(x, B)= 0 \}$ is empty. By the smoothness
of $x\mapsto \tilde{P}^n(x, B)$, it is easy to see that $A^c = \{x\in \Omega
: \exists n>0, \tilde{P}^n(x, B) > 0  \}$ is open, so that $A$ is closed.
Moreover, for all $x\in A$ it holds that $0 = \sum_{n=1}^\infty \tilde{P}^n(x,
B) \geq \sum_{n=1}^\infty \tilde{P}^{n+1}(x, B) = \int_{\Omega } \tilde{P}^1(x,
dy)\sum_{n=1}^\infty \tilde{P}^n(y, B)$. But since by the definition of $A$ we
have
$\sum_{n=1}^\infty \tilde{P}^n(y, B) > 0$ for all $y\in A^c$, we must have
$\tilde{P}^1(x, A^c) = 0$ for all $x\in A$, so that $A$ is invariant. But then
by
\lref{lem:closedallornothing} either $A=\emptyset$ or $A = \Omega$. We need to
eliminate the second possibility. Since $m(B) > 0$ and $\nu$ has positive
density, we 
have $\nu(B)>0$. By the invariance of $\nu$, we have $\int_\Omega \tilde{P}^1(x,
B) \dd \nu  = \nu(B) > 0$. But then there is some $x\in \Omega$ such that
$\tilde{P}^1(x, B) > 0$, so that $x\in A^c$. Therefore $A\neq \Omega$, and thus
$A = \emptyset$ and the process is irreducible with measure $m$. That $m$
is a maximal irreducibility measure follows immediately from the fact that
the transition probabilities are absolutely continuous with respect to $m$.
This completes the proof.
\end{proof}

Thus, we have proved \pref{prop:controlstart}(ii), so that the proof of
\pref{prop:controlstart} is complete.

\subsubsection{Direct control version}

We give now an alternate proof of \pref{prop:controlstart}(ii).
We establish
the irreducibility of our process by using controllability arguments.
We aim to establish the controllability of \eref{eq:sde}, where the
Brownian motions $B^1_t$ and $B^3_t$ are replaced with some deterministic,
smooth controls $f_b :\mathbb R^+ \to \mathbb R$.
By absorbing some terms into the controls $f_b$, this problem is obviously
equivalent to controlling the differential equation
\begin{equ}\label{eq:controlde2}
\begin{aligned}
	\dot q_i(t)&=p_i(t)~,\\
	\dot  p_2(t)&=-\sum_{b}w_b\big(q_2(t)-q_b(t)\big) ~,\\
	\dot  p_b(t)&=f_b(t)~.
\end{aligned}
\end{equ}
In \cite{eckmann_entropy_1999} the irreducibility of chains oscillators
has been studied. The authors have proved that chains of any length
are controllable in arbitrarily small times.
This is of course not the case in our model: since
the force applied to $p_2$ is bounded by some constants
$$
K^- = \sum_{b}\min_{s\in \mathbb T} w_b(s), \quad K^+ = \sum_{b}\max_{s\in
\mathbb T} w_b(s)~,
$$
the minimal time we need to bring the system from $x^i = (q^i, p^i)$ to
$x^f = (q^f, p^f)$ is at best
proportional to $|p_2^f - p_2^i|$. On the other hand,
$q_1, p_1, q_3, p_3$ can be put into any position in arbitrarily short
time. Observe that due to \aref{as:assumptioncoupling} and the fact that
$\avg{w_b} = 0$, we have $K^- < 0 <K^+$. 
We will
prove the following proposition (remember that the positions $q_i$ 
are defined modulo $2\pi$).

\begin{proposition}\label{prop:control}The system \eref{eq:controlde2} is
approximately controllable in the sense that for all
$x^i = (q^i, p^i)$, $x^f = (q^f, p^f)$ and all
$\varepsilon>0$, there is a time $T^* > 0$ satisfying
$T^*\leq c_1+c_2|p_2^f-p_2^i|$ for some constants $c_1$ and $c_2$ such that for
all $T>T^*$  there are some smooth controls $f_1, f_3:[0,T]\to \mathbb R$ such
that the solution of \eref{eq:controlde2} with initial condition
$x^i$  satisfies $\|x(T)-x^f\| <
\varepsilon$. 
\end{proposition}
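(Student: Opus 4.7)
The plan is to split $[0,T]$ into three phases: two short boundary phases that steer $(q_1,p_1,q_3,p_3)$ freely using $f_1,f_3$, and one long middle phase that drives $p_2$ from $p_2^i$ to (near) $p_2^f$ with appropriate "bookkeeping" on $q_2$. The point is that $f_b$ controls $p_b$ directly, and hence $q_b$ via one integration, while $p_2$ is indirectly controlled through the forcing $-w_1(q_2-q_1)-w_3(q_2-q_3)$, which by the assumption $\langle w_b\rangle=0$ ranges over an interval $[-K^+,-K^-]$ containing both strictly positive and strictly negative values.

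For the boundary phases, I would invoke the standard fact that a double integrator $(\dot q_b,\dot p_b)=(p_b,f_b)$ is exactly controllable in any prescribed time $\delta>0$ by a smooth control (explicitly, one can use a degree-3 Hermite interpolation between the two prescribed boundary states, and then read off $f_b$). Choose $\delta$ small enough that during a boundary phase, $p_2$ varies by at most $C\delta$ and $q_2$ by at most $C\delta\cdot\max(|p_2^i|,|p_2^f|)$, which gives total contributions bounded by the constant $c_1$ in the statement. In the middle phase, I would prescribe a continuous target function $t\mapsto F(t)\in[a,b]$ with $[a,b]\subset(-K^+,-K^-)$, constant in absolute value and of the sign of $p_2^f-p_2^i$, and at each time choose $q_1(t),q_3(t)$ smoothly as functions of the currently observed $q_2(t)$ so that $-w_1(q_2-q_1)-w_3(q_2-q_3)=F(t)$; the associated $p_1,p_3,f_1,f_3$ are then determined. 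This drives $p_2$ monotonically and covers $|p_2^f-p_2^i|$ in a time $\leq c_2|p_2^f-p_2^i|$.

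The main obstacle, and the reason controllability is only approximate, is matching $q_2(T)\approx q_2^f\bmod 2\pi$: the middle phase produces a definite angular drift $\int p_2\,dt$ which is essentially fixed once the rate of change of $p_2$ is fixed. My plan to handle this is to insert a one-parameter family of "tuning" perturbations into the middle phase of bounded total extra duration $c_0$, for instance a short sub-interval during which $p_2$ is pushed to $p_2^f+\lambda$ and brought back to $p_2^f$, with $\lambda$ ranging over a bounded interval. The resulting $q_2(T)$ depends continuously on $\lambda$, and by choosing the duration $c_0$ and the range of $\lambda$ large enough (independently of $p_2^f-p_2^i$), the image sweeps an interval of length $>2\pi$. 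An intermediate-value argument then yields some $\lambda^*$ so that $q_2(T)=q_2^f\bmod 2\pi$ within any prescribed tolerance $\varepsilon$.

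Finally, after the middle phase $(q_1,p_1,q_3,p_3)$ will sit at some determined intermediate values, and a second short boundary phase of duration $\delta$ steers them to $(q_1^f,p_1^f,q_3^f,p_3^f)$; during this phase $p_2$ and $q_2$ move by at most $O(\delta)$, which is absorbed into the tolerance $\varepsilon$ by choosing $\delta$ small. The total duration is $2\delta+c_0+c_2|p_2^f-p_2^i|$, which is of the claimed form $T^*\leq c_1+c_2|p_2^f-p_2^i|$. Any time $T>T^*$ can be accommodated by extending the middle phase with a longer sub-interval in which $\dot p_2\equiv 0$ (taking the force function $F$ identically zero, which is possible since $0\in(a,b)$), at the cost of adjusting the tuning parameter $\lambda$ accordingly; this preserves approximate controllability for all $T>T^*$.
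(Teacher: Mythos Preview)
Your strategy is essentially that of the paper: both control the middle rotor by choosing $q_1,q_3$ so that $-\sum_b w_b(q_2-q_b)$ takes a prescribed value in the interval $[K^-,K^+]\ni 0$, treat the boundary data for $(q_b,p_b)$ separately (exploiting that the double integrators are controllable in arbitrarily short time), and tune $q_2$ by inserting a ``triangle'' maneuver of bounded duration. The paper organizes this slightly differently---it first proves an abstract control lemma for the reduced system $\dot{\bar q}_2=\bar p_2$, $\dot{\bar p}_2=g(t)$ with piecewise constant $g\in[K^-,K^+]$, and only afterwards realizes $g$ through piecewise smooth $q_b^*$ that are then smoothed---whereas you construct the full trajectory directly via phases and a feedback $q_b=q_2-c_b$. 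One small point: for the $q_2$-tuning the paper fixes the triangle's duration $2\Delta$ and varies the force amplitude $a\in[0,K^*]$, so the $q_2$-drift is $2\Delta p_2^f+a\Delta^2$ and its range in $a$ has length $K^*\Delta^2$, manifestly independent of $p_2^i,p_2^f$. Your parametrization by the overshoot height $\lambda$ makes the triangle's duration vary with $\lambda$, which then interacts with the compensating zero-force sub-interval; the resulting drift is quadratic in $\lambda$ with coefficients involving $p_2^i-p_2^f$, and one needs both signs of $\lambda$ to guarantee a range $>2\pi$ with $|\lambda|$ bounded uniformly. This still works, but the fixed-duration/variable-force trick is cleaner and avoids that bookkeeping.
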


This property implies the irreducibility of the chain, since the classical
result of Stroock and Varadhan \cite{stroock_1972} links the support of the
semigroup $P^t$ and
the accessible points for \eref{eq:controlde2}, and implies in particular that
for all $x^i=(q^i, p^i)$ and $t>c_1$ the subspace
$\{x\in \Omega:\, |p_2-p_2^i|\leq (t-c_1)/c_2\}$ is included in
the support of $P^t(x^i,\argcdot)$.

The idea is the following: in the next lemma, we show how the middle
rotor can be forced into any configuration by applying
some piecewise constant force $g(t)$ to it, with $g(t) \in [K^-, K^+]$.
Then, we will argue that one can move $q_1$ and $q_3$ (on which
we have good control) in such a way that the force exerted on the middle rotor
is almost $g(t)$.

\begin{lemma}\Label{lem:controlcentre}Consider the system
\begin{equ}\label{eq:controlde3}
	\begin{aligned}
		\dot {\bar q}_2(t)&=\bar p_2(t)~,\\
		\dot {\bar p}_2(t)&=g(t) - u_2(\bar q_2(t))~,\\
	\end{aligned}
\end{equ}
and fix some initial and terminal conditions $(q_2^i, p_2^i)$ and $(q_2^f, p_2^f)$.
We claim that there is a $T^*$ satisfying
$T^*\leq c_1+c_2|p_2^f-p_2^i|$ for some constants $c_1$ and $c_2$ such that for
all $T>T^*$ there is a piecewise constant control $g(t) : \mathbb R^+ \to [K^-,
K^+]$ (with finitely many constant pieces) such that the solution of
\eref{eq:controlde3} with initial data $(q_2^i, p_2^i)$ satisfies $\bar p_2(T) =
p_2^f$ and $\bar q_2(T) = q_2^f$.
\end{lemma}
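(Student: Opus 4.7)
The plan is to design a piecewise-constant bang-bang control $g(t) \in \{K^-, K^+\}$ that proceeds in three stages: (i) drive $\bar p_2$ up in magnitude to some large value $P$; (ii) exploit the fast winding of $\bar q_2$ under large $\bar p_2$ to land on the correct angle; and (iii) decelerate $\bar p_2$ down to the target $p_2^f$ at exactly time $T$. Since the control is bang-bang, it is automatically piecewise constant once we verify that only finitely many switches occur on the finite interval $[0,T]$.

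For stage (i), I would work with the energy $E = \tfrac12\bar p_2^{\,2} + U_2(\bar q_2)$, which along \eref{eq:controlde3} satisfies $\dot E = \bar p_2\, g$. Selecting $g = K^+$ when $\bar p_2 > 0$ and $g = K^-$ when $\bar p_2 < 0$ produces $\dot E \geq \min(K^+, |K^-|)\,|\bar p_2|$. Consequently, as soon as $|\bar p_2|$ stays bounded away from $0$ the energy grows linearly and $|\bar p_2|$ reaches any prescribed value $P$ in time $O(P)$. The delicate point is to escape the region $|\bar p_2|\leq 1$ in bounded time: if $\bar p_2 = 0$ at a point where $u_2(\bar q_2) = K^+$ and $u_2'(\bar q_2)>0$, the state sits at a center of the ODE with $g \equiv K^+$; however, switching to $g = K^-$ at that moment gives $\dot{\bar p}_2 = K^- - K^+ < 0$, which immediately dislodges it. By alternating the two choices of $g$ over a short transient we can guarantee that $|\bar p_2| \geq 1$ in time at most some constant $c_1$ independent of the initial state.

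Stage (ii) uses the fact that once $\bar p_2 \approx P$, the angle advances at rate $\approx P$, so $\bar q_2 \bmod 2\pi$ sweeps through the whole torus in time $2\pi/P$. Holding $g$ constant for a duration chosen continuously in an interval of length $\geq 2\pi/P$ thus allows $\bar q_2$ to hit any prescribed value modulo $2\pi$. Stage (iii) uses $g = K^-$ (resp.\ $K^+$, depending on the chosen sign of $P$) to decrease $\bar p_2$ to $p_2^f$ in time $\lesssim (P - p_2^f)/|K^-|$; the evolution of $\bar q_2$ during this stage is deterministic once the initial state of stage (iii) is fixed, and the free parameter in stage (ii) is set precisely so as to compensate for this deterministic drift and land at $\bar q_2 = q_2^f$ at time $T$. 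To accommodate an arbitrary $T > T^*$, one enlarges $P$: this stretches stages (i) and (iii) while leaving stage (ii)'s tuning range intact, and choosing $P = \max(|p_2^i|, |p_2^f|) + c$ for a suitable constant $c$ yields the required bound $T^* \leq c_1 + c_2|p_2^f - p_2^i|$.

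The main obstacle is stage (i), specifically proving a uniform bound on the escape time from the neighborhood of $\bar p_2 = 0$ independently of $(q_2^i, p_2^i)$. The key algebraic input is that the two choices $g = K^\pm$ differ by $K^+ - K^- > 0$, so at every point of phase space at most one of them is a true equilibrium of \eref{eq:controlde3}; alternating between the two over short intervals therefore always produces a net positive increment of $|\bar p_2|$ per switch, giving the uniform bound and closing the proof.
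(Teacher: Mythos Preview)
Your three-stage bang-bang scheme is quite different from what the paper actually does. The paper proves the lemma only for $u_2\equiv 0$ (and says so explicitly), and in that case uses a much simpler two-phase construction with at most three constant pieces. First, set $g\equiv K^+$ (or $K^-$, according to the sign of $p_2^f-p_2^i$) on $[0,\Theta)$ with $\Theta=|p_2^f-p_2^i|/|K^\pm|$; this drives $\bar p_2$ \emph{directly} to the target value $p_2^f$. Second, apply a symmetric pulse $g=a$ on $[\Theta,\Theta+\Delta)$ and $g=-a$ on $[\Theta+\Delta,\Theta+2\Delta]$ for some amplitude $a\in[0,K^*]$, $K^*=\min(K^+,|K^-|)$. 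This returns $\bar p_2$ to $p_2^f$ while shifting $\bar q_2$ by $2\Delta p_2^f+a\Delta^2$; as soon as $\Delta>\sqrt{2\pi/K^*}$ the free parameter $a$ hits any prescribed angle. Setting $\Delta=(T-\Theta)/2$ gives the result for every $T>T^*$ with $T^*\le c_1+c_2|p_2^f-p_2^i|$. Note that this control is not bang-bang: the continuously varying amplitude $a$ is precisely what makes the position adjustment clean while leaving the momentum untouched.

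Your construction instead overshoots to a large auxiliary level $P$, adjusts the angle there, and then decelerates, and you attempt to cover general $u_2$. This is more ambitious but several steps are not tied down. The sign-switching rule in stage~(i) is a feedback law that can chatter near $\bar p_2=0$, and the sentence ``alternating the two choices over a short transient'' does not yet yield a uniform escape-time bound or finitely many switches. For $u_2\not\equiv 0$ the $\bar q_2$-drift during stage~(iii) depends on the angle at which that stage begins, so the tuning in stage~(ii) and the outcome of stage~(iii) are genuinely coupled; this needs a continuity or fixed-point argument you do not supply. Finally, you have two constraints (final angle and prescribed total time $T$) and two parameters ($P$ and the stage-(ii) duration), but treating them sequentially does not establish solvability for \emph{every} $T>T^*$. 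None of this is fatal in principle, but the paper's route---available once one restricts to $u_2=0$---bypasses all of it by first fixing the momentum and then using the symmetric pulse to decouple the angle adjustment from everything else.
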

\begin{proof} We prove this result only in the case $u_2 \equiv 0$. If $p_2^f
\geq 
p_2^i$, then let $\Theta =  (p_2^f-p_2^i)/{K^+}$ and let $g(t) = K^+$ for
all $t\in [0, \Theta)$. If $p_2^f < p_2^i$ let $\Theta = (p_2^f-p_2^i)/
{K^-}$ and let $g(t) = K^-$ for all $t\in [0, \Theta)$. In both cases,
$\bar p_2(\Theta) = p_2^f$, while $\bar q_2(\Theta)$ might be anything.
Let now $K^* = \min(|K^+|, |K^-|)$, and consider some $\Delta>0$ and $ a\in
[0,K^*]$.
Assume that $g(t) =a$ when $t\in [\Theta,
\Theta+\Delta)$ and $g(t) = -a$ when $t\in [\Theta+\Delta, \Theta + 2\Delta]$.
Clearly $\bar p_2(\Theta + 2\Delta) = \bar p_2(\Theta) = p_2^f$ and
$$
\bar q_2(\Theta + 2\Delta) = \bar q_2(\Theta) + 2\Delta p_2^f + a\Delta^2~.
$$
Observe that as soon as $\Delta > \sqrt{2\pi/K^*}$, we can choose $a\in [0,
K^*]$ so that $\bar q_2(\Theta+2\Delta)$ takes any value (modulo $2\pi$). In
particular,
we can choose it to be $q_2^f$, so that we have the advertised result with $T^*
= \Theta + \sqrt{2\pi /K^*}$.
\end{proof}
\begin{remark}We have given a proof only if $u_2\equiv 0$. However, the result
remains true even if $u_2 \neq 0$, although the proof is much more involved.
Typically, if the pinning is stronger than the interaction forces $w_b$, and the
initial condition is such that $p_2$ is small, we sometimes have to push the
middle rotor several times back and forth to increase its energy enough to
pass above the ``potential barrier'' created by $U_2$. Conversely, we sometimes
have to brake the middle rotor with some non-trivial controls.
\end{remark}

We now have some piecewise constant control $g(t)$ that can bring the middle
rotor to the final configuration of our choice. It remains to show that we can
make  the external rotors follow some trajectories that have the appropriate
initial and terminal conditions, and such that the force exerted on the middle rotator
closely approximates $g(t)$. We do not prove this in detail,
but we list here the main steps.
\begin{myitem}
\item Since $K^- \leq g(t) \leq K^+$,
it is possible to find piecewise smooth functions $
q_b^*(t)$, $b=1,3$, such that $\sum_{b}w_b(\bar q_2(t)- q_b^*(t)) \equiv  g(t)$, where
$\bar q_2(t)$ is the solution of \eref{eq:controlde3}.
\item Let $\delta > 0$ be small. We can find some smooth trajectories $q_b(t)$ compatible
with the boundary conditions $x^i$ and $x^f$, such that $q_b(t) = q_b^*(t)$ for all 
$t\in [0,T]\setminus A_\delta$, where $A_\delta$ consists of a finite number of
intervals of total length at most $\delta$. We can choose the controls $f_b$ 
so that the $q_b(t)$ constructed here are solutions to \eref{eq:controlde2}
(when $\delta$ is small, $f_b(t)$ is typically very large for $t\in A_\delta$).
\item Since the interaction forces $w_b$ are bounded, their effect during the times
$t\in A_\delta$ is negligible when $\delta$ is small. More precisely, it can
be shown that the solution $q_2(t)$ and $p_2(t)$ of
\eref{eq:controlde2} converge uniformly on $[0, T]$ to 
the solutions $\bar q_2(t)$ and $\bar p_2(t)$ of \eref{eq:controlde3}
when $\delta \to 0$. Therefore, the system is approximately controllable in the
sense of \pref{prop:control}.
\end{myitem}

\section{Numerical illustrations}\label{sec:numerics}

In this section we illustrate some
properties of the invariant measure in the case where $U_i \equiv  0$ and
$W_1= W_3 = -\cos$. 

We use throughout the values
$\gamma_1=\gamma_3=1$ and $\tau_1=0$.
We give examples of how the marginal 
distributions of $p_1, p_2, p_3$ depend on the temperatures $T_1, T_3$ and the
external force $\tau_3$. We apply the numerical algorithm given in
\cite{iacobucci_negative_2011} with time-increment $h
=0.001$. The
resulting graphs are quite independent of $h$. In order to obtain good
statistics and smooth curves,
the probability densities shown below are sampled over $10^8$ units of time
and several hundred bins.

At equilibrium, \ie when $T_1 = T_3 = T$ and $\tau_3=0$ (remember that
$\tau_1 = 0$ in this section), the
marginal law of each $p_i$ has a density proportional to $\exp(-{p_i^2
/2T})$
for $i=1,2,3$. This is obviously not the case out of
equilibrium. Moreover, since we work 
with a finite number of rotors, we do not expect to see any form of local
thermal equilibrium in the bulk of the chain (here the ``bulk'' consists of
only the middle rotor). Clearly, the distribution of
$p_2$ can be quite far from Maxwellian (Gaussian).

In \ref{f:tleftvarie} we show the marginal distributions of $p_1, p_2, p_3$ for
different temperatures and no external force. For each pair of temperatures, we
show the distributions both in linear and logarithmic scale. At equilibrium,
when $T_1 = T_3 = 10$, all three distributions coincide exactly and are
Gaussian. However, when $T_1 \neq 
T_3$, we see that the distribution of $p_2$ is not Gaussian (clearly, the
distribution is not a parabola in logarithmic scale).
\begin{figure}[ht]
\centering
\subfloat[$T_1 = 1$, $T_3=10$]{

\includegraphics[width=1.95in]{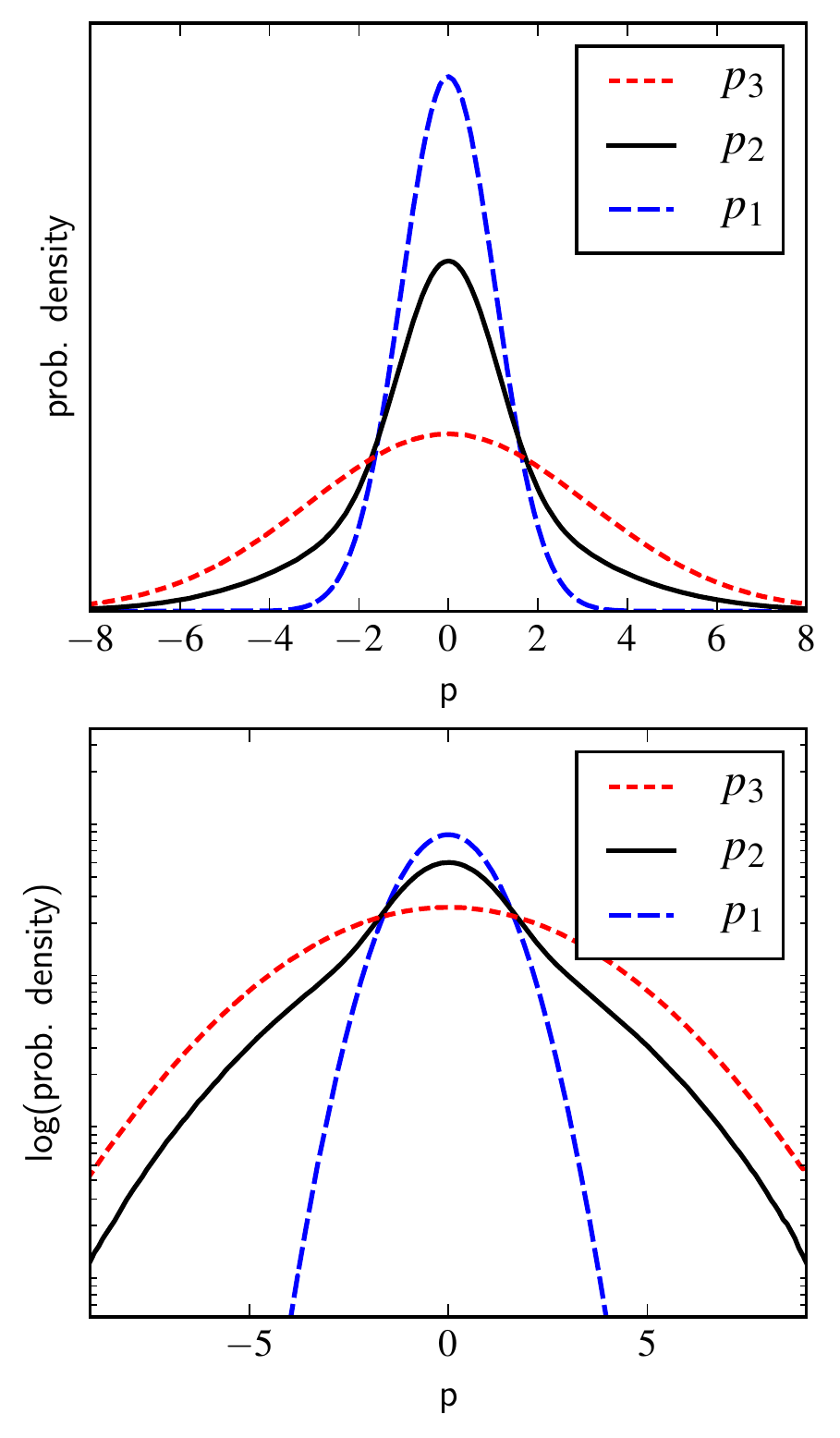}
}
\subfloat[$T_1 = 2$, $T_3=10$]{

\includegraphics[width=1.95in]{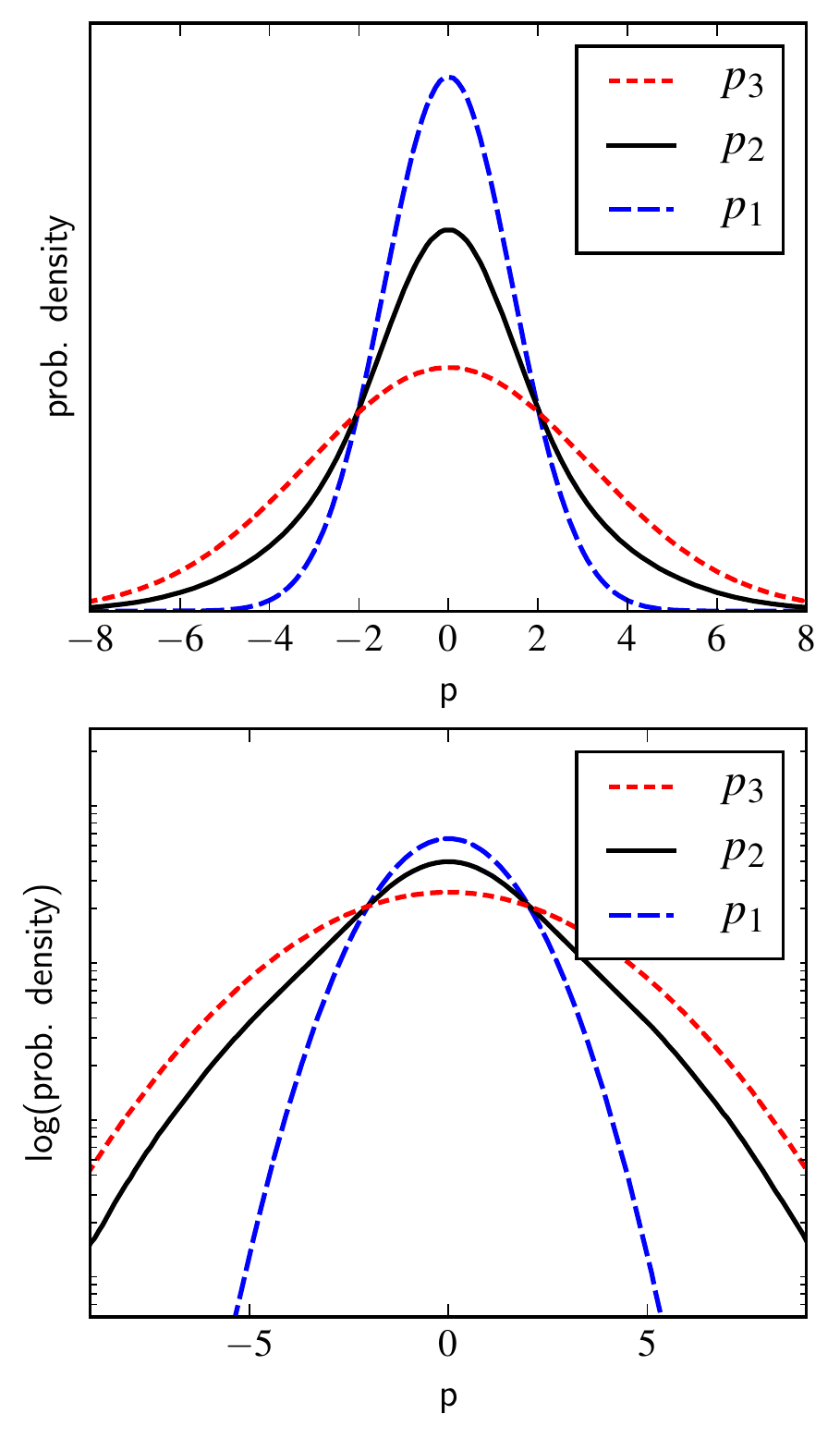}
}
\subfloat[$T_1 = 10$, $T_3=10$]{

\includegraphics[width=1.95in]{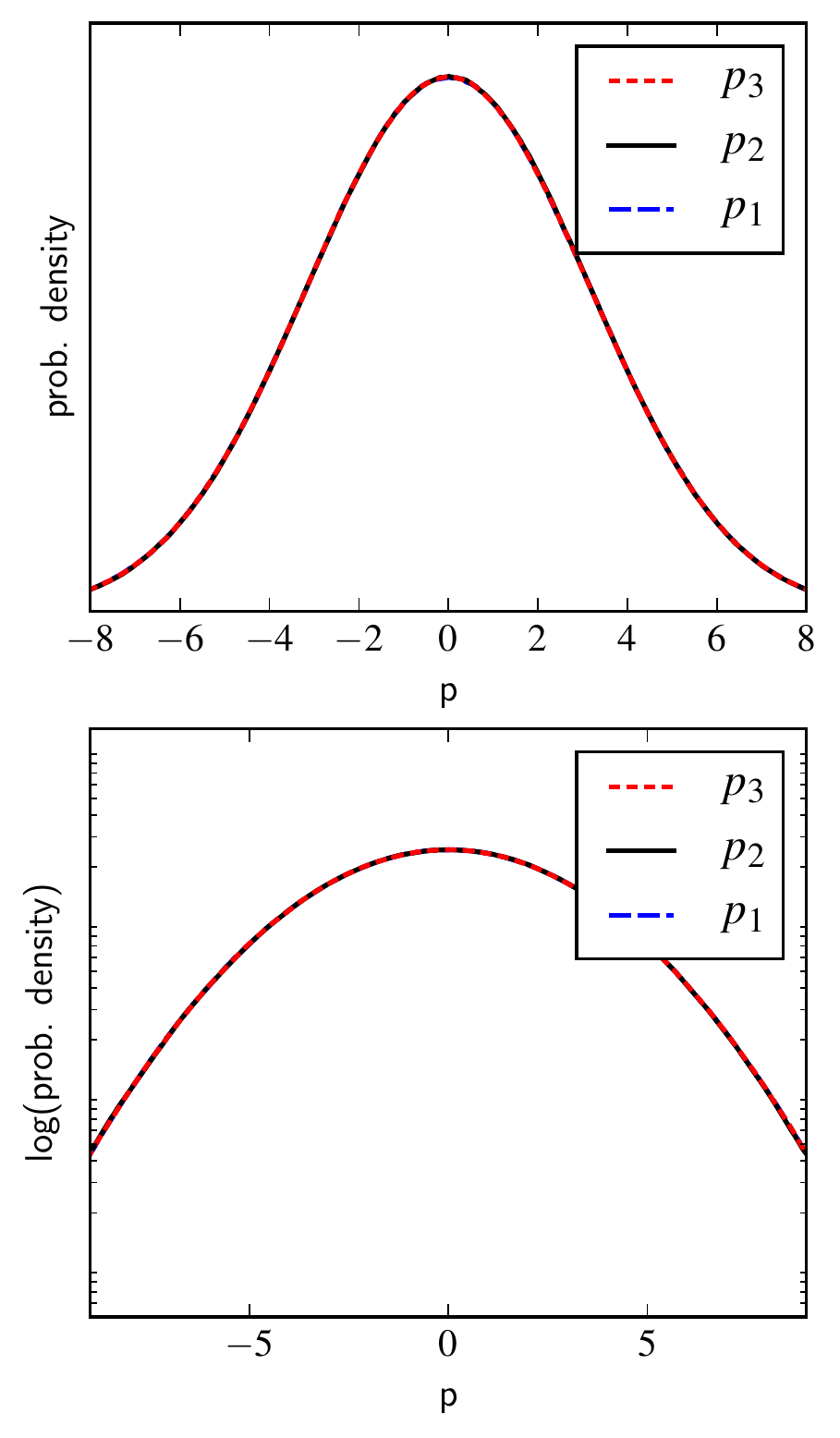}
}\caption{Distribution of $p_1, p_2, p_3$, with no external force and several
temperatures.}\label{f:tleftvarie}
\end{figure}

\begin{figure}[ht]
\centering
\subfloat[$\tau_3 = 0$]{
\includegraphics[width=1.8in]{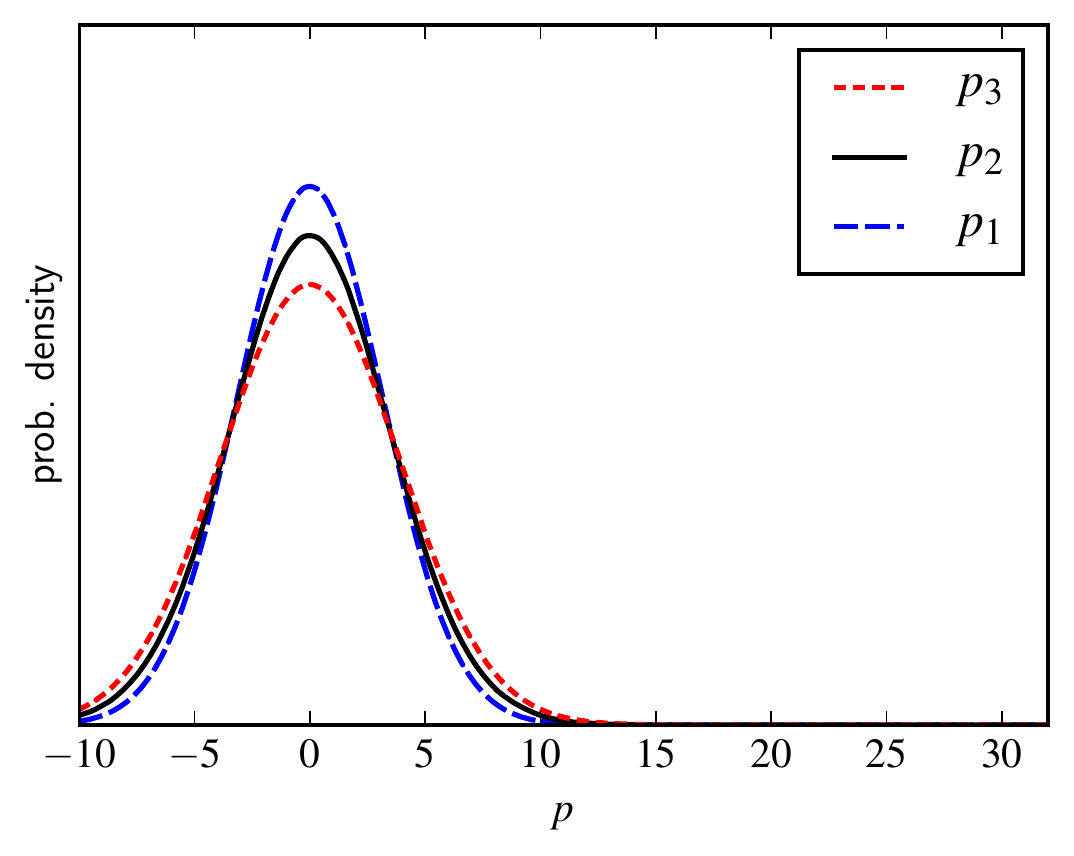}
}
\subfloat[$\tau_3 = 10$]{
\includegraphics[width=1.8in]{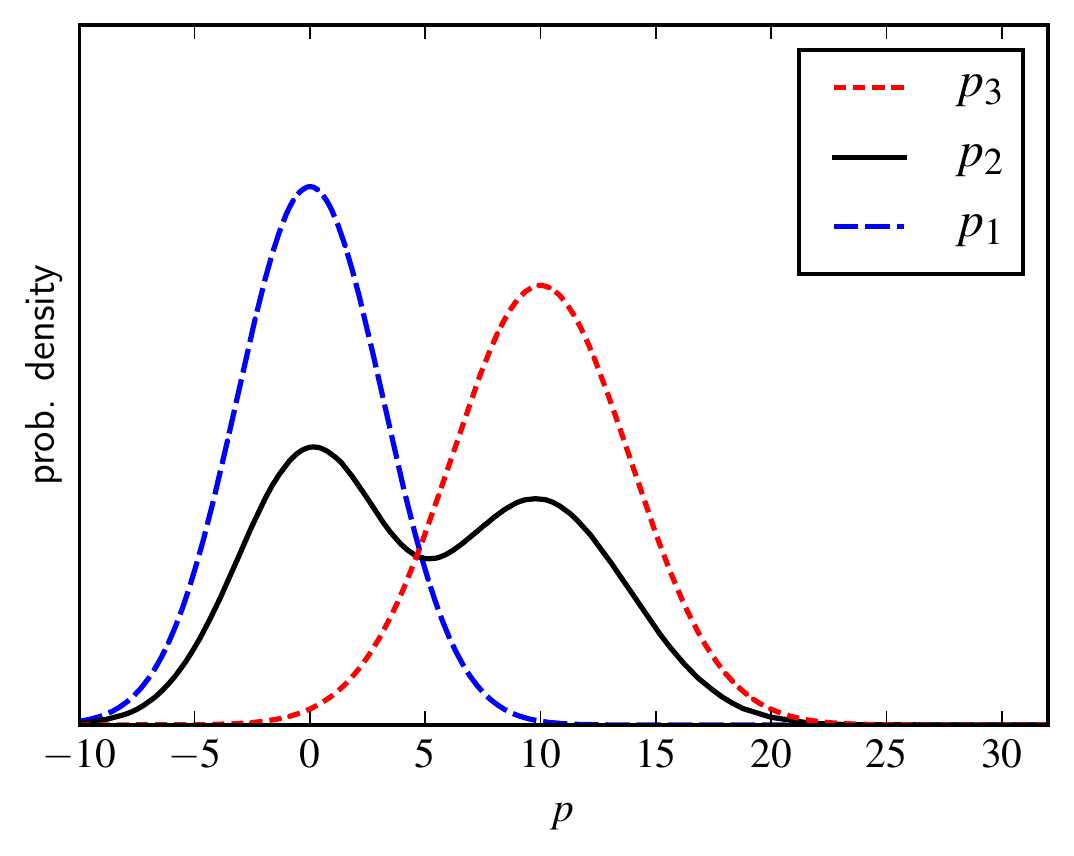}
}
\subfloat[$\tau_3 = 20$]{
\includegraphics[width=1.8in]{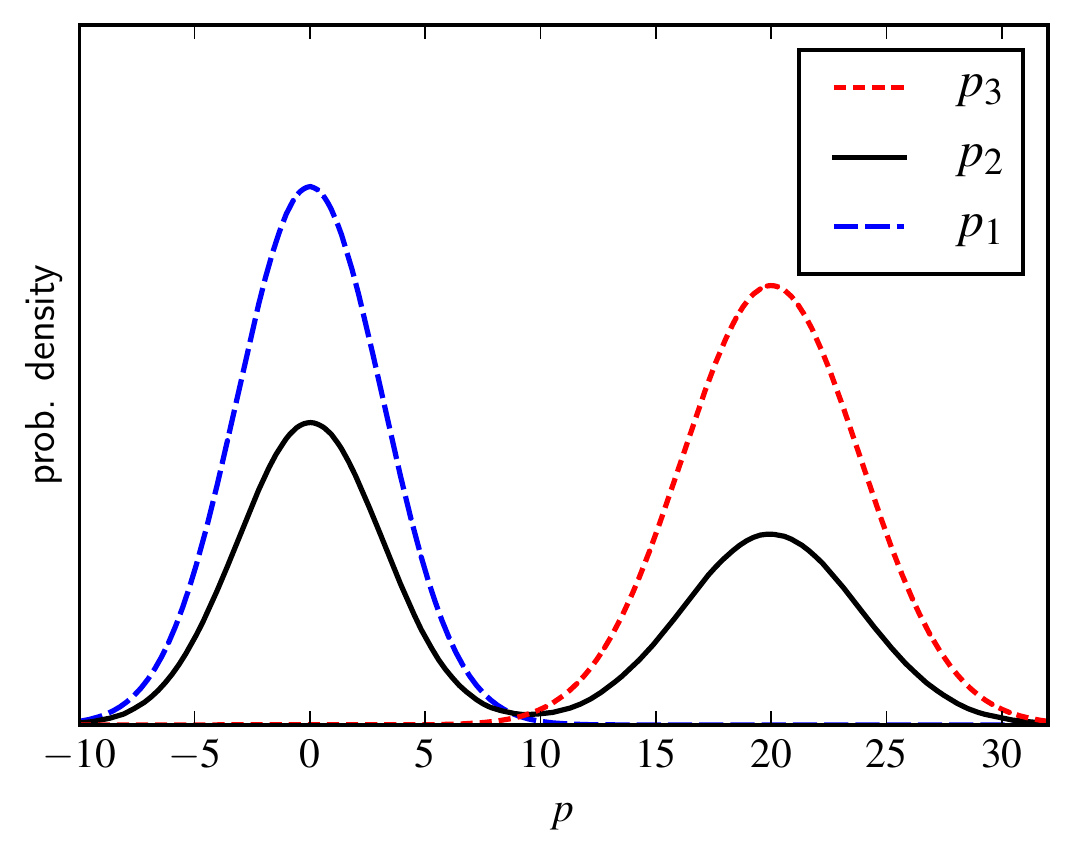}
}
\caption{Distribution of $p_1, p_2, p_3$, with $T_1 = 10, T_3 =
15$ for 3 values of $\tau_3$.}\label{f:tauvarie}
\end{figure}

\begin{figure}[ht]
\centering
\includegraphics[width=6in]{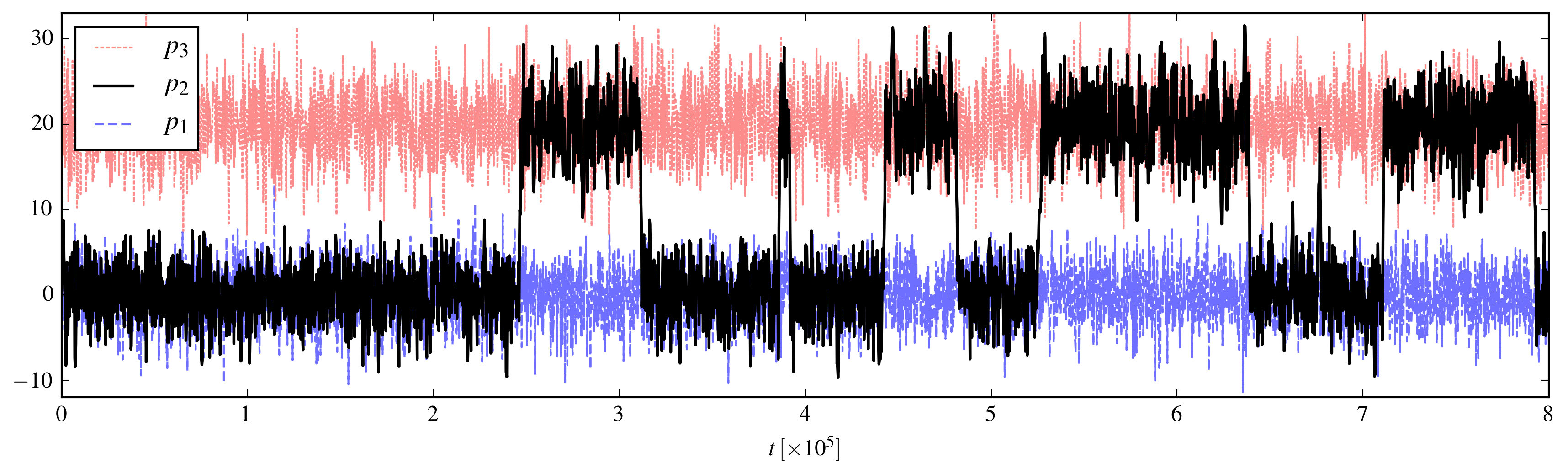}
\caption{Representation of the evolution of $p_1, p_2, p_3$ with $T_1 = 10,
T_3 = 15, \tau_3 = 20$.}\label{f:trajtauvarie}
\end{figure}

We next consider the effect of the external force $\tau_3$ on the marginal
distributions of the $p_i$, for $T_1 = 10$ and $T_3 = 15$. As
illustrated in \fref{f:tauvarie}, the distributions of $p_1$ and $p_3$
are close
to  Gaussians with variance $T_1$ and $T_3$ and mean $0$ and $\tau_3$.
Note that when $\tau_3\neq 0$, the
distribution of $p_2$ has two maxima: one at $0$ and one at
$\tau_3$. The explanation for these two maxima can be found by looking at the
trajectories $p_i(t)$ as shown in \fref{f:trajtauvarie} (for $\tau_3
=20$); $p_1$ fluctuates around $0$, $p_3$ fluctuates
around $\tau_3$, and $p_2$ switches between these two regimes. In the regime
where $p_2$ fluctuates around zero, the rotor 2 interacts strongly with 1 and
weakly with 3 (since then the force $w_3$ oscillates with ``high frequency''
$p_3-p_2 \sim \tau_3$). Inversely, in the regime where $p_2$ fluctuates
around $\tau_3$, it interacts strongly with 3 and only weakly with 1. Other
simulations (not shown here) show that, as expected, the larger $\tau_3$, the
less frequent the switches between these two regimes. The asymmetry of the two
maxima in \fref{f:tauvarie} is explained
by the inequality $T_1 < T_3$, which makes the fluctuations larger in the second
regime, so that the mean sojourn time there is shorter.

\subsection*{Note added in proof}

Based on the results of this paper, an extension to four rotors has been 
obtained more recently \cite{four_rotors_2015}.

\subsection*{Acknowledgments} This work has been partially supported by two
ERC Advanced Grants (``Bridges'' 290843 and ``MALAdY'' 246953). 
C.P. thanks the Paris Dauphine University and the CEREMADE
laboratory, where he held a post-doctoral position during most of this project.
We thank
M.~Hairer, F.~Huveneers, A.~Iacobucci, S.~Olla, and G.~Stolz for valuable advice
and fruitful discussions.

\bibliography{refs}

\end{document}